\newif\ifAC@uppercase@first%
\def\Aclp#1{\AC@uppercase@firsttrue\aclp{#1}\AC@uppercase@firstfalse}%
\def\AC@aclp#1{%
  \ifcsname fn@#1@PL\endcsname%
    \ifAC@uppercase@first%
      \expandafter\expandafter\expandafter\MakeUppercase\csname fn@#1@PL\endcsname%
    \else%
      \csname fn@#1@PL\endcsname%
    \fi%
  \else%
    \AC@acl{#1}s%
  \fi%
}%
\def\Acp#1{\AC@uppercase@firsttrue\acp{#1}\AC@uppercase@firstfalse}%
\def\AC@acp#1{%
  \ifcsname fn@#1@PL\endcsname%
    \ifAC@uppercase@first%
      \expandafter\expandafter\expandafter\MakeUppercase\csname fn@#1@PL\endcsname%
    \else%
      \csname fn@#1@PL\endcsname%
    \fi%
  \else%
    \AC@ac{#1}s%
  \fi%
}%
\def\Acfp#1{\AC@uppercase@firsttrue\acfp{#1}\AC@uppercase@firstfalse}%
\def\AC@acfp#1{%
  \ifcsname fn@#1@PL\endcsname%
    \ifAC@uppercase@first%
      \expandafter\expandafter\expandafter\MakeUppercase\csname fn@#1@PL\endcsname%
    \else%
      \csname fn@#1@PL\endcsname%
    \fi%
  \else%
    \AC@acf{#1}s%
  \fi%
}%
\def\Acsp#1{\AC@uppercase@firsttrue\acsp{#1}\AC@uppercase@firstfalse}%
\def\AC@acsp#1{%
  \ifcsname fn@#1@PL\endcsname%
    \ifAC@uppercase@first%
      \expandafter\expandafter\expandafter\MakeUppercase\csname fn@#1@PL\endcsname%
    \else%
      \csname fn@#1@PL\endcsname%
    \fi%
  \else%
    \AC@acs{#1}s%
  \fi%
}%
\edef\AC@uppercase@write{\string\ifAC@uppercase@first\string\expandafter\string\MakeUppercase\string\fi\space}%
\def\AC@acrodef#1[#2]#3{%
  \@bsphack%
  \protected@write\@auxout{}{%
    \string\newacro{#1}[#2]{\AC@uppercase@write #3}%
  }\@esphack%
}%
\def\Acl#1{\AC@uppercase@firsttrue\acl{#1}\AC@uppercase@firstfalse}
\def\Acf#1{\AC@uppercase@firsttrue\acf{#1}\AC@uppercase@firstfalse}
\def\Acfi#1{\AC@uppercase@firsttrue\acfi{#1}\AC@uppercase@firstfalse}
\def\Ac#1{\AC@uppercase@firsttrue\ac{#1}\AC@uppercase@firstfalse}
\def\Acs#1{\AC@uppercase@firsttrue\acs{#1}\AC@uppercase@firstfalse}
\def\AC@@acro#1[#2]#3{%
  \ifAC@nolist%
  \else%
  \ifAC@printonlyused%
    \expandafter\ifx\csname acused@#1\endcsname\AC@used%
       \item[\protect\AC@hypertarget{#1}{\acsfont{#2}}] #3%
          \ifAC@withpage%
            \expandafter\ifx\csname r@acro:#1\endcsname\relax%
               \PackageInfo{acronym}{%
                 Acronym #1 used in text but not spelled out in
                 full in text}%
            \else%
               \dotfill\pageref{acro:#1}%
            \fi\\%
          \fi%
    \fi%
 \else%
    \item[\protect\AC@hypertarget{#1}{\acsfont{#2}}] #3%
 \fi%
 \fi%
 \begingroup
    \def\acroextra##1{}%
    \@bsphack
    \protected@write\@auxout{}%
       {\string\newacro{#1}[\string\AC@hyperlink{#1}{#2}]{\AC@uppercase@write #3}}%
    \@esphack
  \endgroup}
\definecolor{ali}{RGB}{0, 150, 0}
\definecolor{mdr}{RGB}{255, 127, 0}
\definecolor{Gray}{gray}{0.8}
\definecolor{Gray2}{gray}{0.93}
\newcolumntype{D}{>{\columncolor{Gray2}}c}
\newcolumntype{B}{>{\itshape\columncolor{white}}c}
\newcolumntype{A}{>{\columncolor{Gray}}c}
\newcommand{\myparagraph}[1]{\vspace*{0.7mm}\emph{\bfseries #1}}
\newcommand{\mExpect}[2]{\ensuremath{\mathbb{E}_{#1}\left[#2\right]}}
\newcommand{\affectedscript}[0]{{\mathcal{A}}}
\newcommand{\notaffectedscript}[0]{{\mathcal{N}}}
\acrodef{IR}{information retrieval}
\acrodef{EM}{Expectation-Maximization}
\acrodef{rbEM}{regression-based EM}
\acrodef{LTR}{learning to rank}
\acrodef{SERP}{Search Engine Results Page}
\acrodef{CLTR}{counterfactual learning to rank}
\acrodef{OLTR}{Online Learning to Rank}
\acrodef{IPS}{inverse propensity scoring}
\acrodef{PBM}{Position-Based Model}
\acrodef{CM}{Cascade Model}
\acrodef{ARP}{average relevance position}
\acrodef{DCG}{discounted cumulative gain}
\acrodef{NDCG}{normalized discounted cumulative gain}
\acrodef{DLA}{Dual Learning Algorithm}
\acrodef{MNAR}{Missing-Not-At-Random}
\acrodef{mPMM}{modified Poisson Mixture Model}
\acrodef{DCM}{Dependent Click Model}
\acrodef{UBM}{User Browsing Model}
\acrodef{CCM}{Click Chain Model}
\acrodef{PMM}{Poisson Mixture Model}
\acrodef{GMM}{Gaussian Mixture Model}
\acrodef{BMM}{Binomial Mixture Model}
\acrodef{CTR}{click-through rate}
\acrodef{AC}{affine correction}
\acrodef{i.i.d.}{independent and identically distributed}
\acrodef{CLT}{Central Limit Theorem}
\acrodef{MAP}{Mean Average Precision}
\acrodef{LOO}{leave one out}
\acrodef{SGD}{stochastic gradient descent}
\acrodef{GD}{gradient descent}
\acrodef{CV}{cross validation}
\acrodef{NN}{neural network}
\acrodef{DTR}{disparate treatment ratio}
\acrodef{EEL}{expected exposure loss}
\acrodef{KS}{Kolmogorov-Smirnov}
\theoremstyle{definition}
\DeclareMathOperator*{\argmin}{argmin}
\theoremstyle{definition}
\newtheorem{remark}{Remark}[]
\renewcommand\paragraph{\@startsection{paragraph}{4}{\parindent}%
  {-.1\baselineskip \@plus -1\p@ \@minus -.1\p@}%
  {-2\p@}%
  {\ACM@NRadjust{\@parfont\@adddotafter}}}
\newcommand{\ROne}[1]{\textcolor{teal}{\textbf{[R1]}}}
\newcommand{\RTwo}[1]{\textcolor{blue}{\textbf{[R2]}}}
\newcommand{\RThree}[1]{\textcolor{cyan}{\textbf{[R3]}}}
\newcounter{tmp@cnt}
\newcommand*\@labelpunc{.}
\newcommand*\combine[1][2]{%
    \refstepcounter{enumi}
    \setcounter{tmp@cnt}{\value{enumi}}
    \addtocounter{enumi}{#1-1}
    \item[\thetmp@cnt--\theenumi\@labelpunc]}
\newcommand*\labeltype[2][]{\gdef\@labelpunc{#1}\renewcommand\thetmp@cnt{#2{tmp@cnt}}}
\keywords{Ranking fairness, Unbiased learning to rank}
\author{Ali Vardasbi}
\affiliation{%
  \institution{Spotify}
  \city{Amsterdam}
  \country{The Netherlands}
}
\email{aliv@spotify.com}
\author{Maarten de Rijke}
\affiliation{%
  \institution{University of Amsterdam}
  \city{Amsterdam}
  \country{The Netherlands}
}
\email{m.derijke@uva.nl}
\author{Fernando Diaz}
\affiliation{%
  \institution{Carnegie Mellon University}
  \city{Pittsburgh}
  \country{The USA}
}
\email{diazf@acm.org}
\author{Mostafa Dehghani}
\affiliation{%
  \institution{Google DeepMind}
  \city{San Francisco}
  \country{The USA}
}
\email{dehghani@google.com}
\title[Group Membership Bias]{The Impact of Group Membership Bias on the Quality and Fairness of Exposure in Ranking}
\begin{document}

\begin{abstract}
When learning to rank from user interactions, search and recommender systems must address biases in user behavior to provide a high-quality ranking.
One type of bias that has recently been studied in the ranking literature is when sensitive attributes, such as gender, have an impact on a user's judgment about an item's utility.
For example, in a search for an expertise area, some users may be biased towards clicking on male candidates over female candidates.
We call this type of bias \emph{group membership bias}.

Increasingly, we seek rankings that are fair to individuals and sensitive groups.
Merit-based fairness measures rely on the estimated utility of the items.
With group membership bias, the utility of the sensitive groups is under-estimated, hence, without correcting for this bias, a supposedly fair ranking is not truly fair.
In this paper, first, we analyze the impact of group membership bias on ranking quality as well as merit-based fairness metrics and show that group membership bias can hurt both ranking and fairness.
Then, we provide a correction method for group bias that is based on the assumption that the utility score of items in different groups comes from the same distribution.
This assumption has two potential issues of sparsity and equality-instead-of-equity; we use an amortized approach to address these.
We show that our correction method can consistently compensate for the negative impact of group membership bias on ranking quality and fairness metrics.
\vspace*{-1mm}
\end{abstract}

\maketitle

\acresetall


\vspace*{-1mm}
\section{Introduction}
\label{sec:intro}

\vspace*{-.25mm}
Online search and recommender systems leverage user interaction data to enhance their ranking quality.
When using human interactions, however, we need to account for human bias and the possibility of learning unfair ranking policies.
In the context of \ac{LTR}, the term \emph{bias} usually refers to unequal treatment of items with equal utility by users~\citep{joachims2005accurately}. 
Studies show that if the bias is ignored, this leads to a degradation in the ranking quality of a system trained on the user interactions~\citep{joachims2017unbiased, wang2018position, ai2018unbiased, vardasbi2020when}.
Correcting for bias is a necessary step for high-quality rankings, but it is not sufficient.
A system should return rankings that strive, to a certain extent, for \emph{fairness of exposure}. There are different definitions for fairness of exposure in ranking, leading to different metrics~\citep{raj2022measuring, singh-2018-fairness, diaz2020evaluating, wu2022joint, JIN2023101906, deldjoo2023fairness}. However, the core idea is the same: items with similar levels of utility should receive similar exposure by the system.
Without meeting fairness of exposure,  bias towards the privileged groups or individuals is reinforced, both in what the system learns from the ongoing interactions~\citep{fabris2020gender, hajian2016algorithmic, saxena2021exploring}, and in users' judgments about the utility of items~\citep{kay2015unequal,suhr2021does}.\looseness=-1

\begin{figure}[t]
    \centering
    {
    \begin{tabular}[]{c@{\hspace{4em}}c}
        \multirow{4}{*}{
        \includegraphics[scale=1]{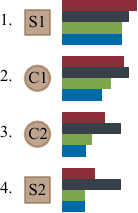}
        }
        & Group average
        \\
        &
        \includegraphics[scale=0.9]{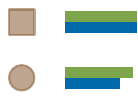}
        \\
        &
        \includegraphics[scale=0.5]{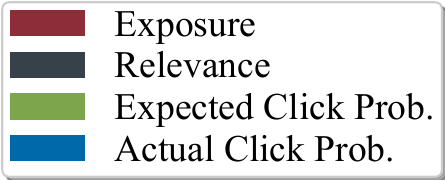}
        
    \end{tabular}
    }
    \caption{The effect of group bias on user clicks.}
    \label{fig:example}
\end{figure}

\vspace{0.2em}
\myparagraph{Group bias.}
A search or recommender system can only ensure that items with similar utility receive comparable exposure to users, by arranging them accordingly. 
However, this alone is insufficient.
Users' judgments about the utility of items are affected by their perception of the item's group membership~\citep{kay2015unequal, suhr2021does, krieg2022perceived, liu2023reducing}.
This means that, even when the exposure of two high-utility items from two different groups is the same, users may judge them differently and one group may receive more clicks than the other.
\cref{fig:example} provides a toy example for this phenomenon. 
Assume a job application system where there are four applicants coming from two groups, shown in the figure by squares ($S1$ and $S2$) and circles ($C1$ and $C2$).
The average relevance over the groups is equal and both groups receive almost equal exposure on average, i.e., the ranking is fair w.r.t. \ac{DTR} definition~\citep{singh-2018-fairness}.
As a result of equal average relevance and exposure, we expect both groups to receive an (almost) equal number of clicks from the employers.
This is shown by the equal-length green bars in the ``group average'' part of the figure.
However, employers' judgments are affected by group bias: they \emph{think} that candidates coming from the square group are more suitable for this type of job, compared to the circle group candidates.
Consequently, while being equally relevant and equally exposed, candidates from the circle group are selected less often, as shown by the blue bars in the ``group average'' part of the figure.
We refer to this behavior as \emph{group membership bias}.
Our study focuses on the scenario of two groups, where the term ``affected'' refers to the group whose items are prone to underestimation and receive fewer clicks than they ideally should (the circle group in the example).
%
Considering clicks as the primary measure of user interaction, we provide both theoretical and empirical analyses for the impact of group biased clicks on ranking quality as well as two merit-based fairness of exposure metrics.\looseness=-1


\myparagraph{Impact on ranking.}
Similar to other types of bias, group bias can potentially degrade the ranking quality of systems.
For example, in \cref{fig:example}, if the clicks are used to infer the relevance of the candidates, without correcting for group bias, inference would be biased towards members from the square group, negatively impacting the ranking quality of the system.
In this work, we first theoretically quantify this degradation with an approximation formula for the \ac{NDCG} metric.
Then, we experimentally analyze the change in the ranking quality of an LTR model trained on clicks that suffer from group bias, compared to the full information case.
\cref{fig:introndcg} (left) shows an example of the impact of group bias on the ranking performance, measured by \ac{NDCG} on the Yahoo! dataset with feature number $426$ as the sensitive attribute (see~\cref{sec:setup}).
In this plot, the bars associated with the ``(non-) affected group'' label show the NDCG@10 when \emph{only} the relevant items from the (non-) affected group are considered relevant.
Note that a lower group underestimation factor means a higher group bias, and a factor equal to $1$ (the leftmost bars) means no group bias.
We observe that the affected group is hurt by group bias, while the other group has gained.
Importantly, the overall ranking quality is degraded by increasing group bias.\looseness=-1

\myparagraph{Impact on fairness.}
Unlike other types of bias that may affect fairness indirectly, group bias has a direct impact on fairness:
Clicks suffering from group bias can lead the system to undervalue the utility scores of a particular group (see, e.g., \cref{fig:introndcg}).
Consequently, when the expected exposure is assigned to groups based on these biased estimates of the utility, the ranking may not be \emph{truly} fair.
For example, in \cref{fig:example}, without correcting for group bias, the square members are inferred to be noticeably more relevant than the circle members and a fair system would swap $S2$ and $C2$ to give more exposure to the group that is more relevant.
This means that, based on the \emph{observed} clicks, the $[S1, C1, S2, C2]$ ranking is considered fairer than the original ranking in the figure.
But based on the \emph{latent} true relevance values, this ranking is far from being fair, giving the square group too much exposure.
For our analyses, we consider two widely used metrics for fairness of exposure, namely \acfi{DTR}~\citep{singh-2018-fairness} and \acfi{EEL}~\citep{diaz2020evaluating, biega-2020-overview}.
Each metric has a definition for the \emph{ideal expected exposure} in terms of the utility, that leads to the fairest ranking.
Distinguishing between the true (unbiased) utility and observed (biased) utility, we provide formulas for the change in the true fairness metrics, when the target expected exposure is obtained from the biased utility.
\cref{fig:introndcg} (right) shows an example of the impact of group bias on the \ac{DTR} fairness metric.
With \ac{DTR}, a ratio of $1$ means the fairest exposure, i.e., the leftmost bar with no group bias.
Similar to the ranking quality, here we also observe that group bias leads to noticeable deviations from the full information case in \ac{DTR} metric.



        

\begin{figure}[t]
    \centering
    {
    \def\arraystretch{.5}
    \begin{tabular}[]{@{}l@{\hspace{0.15em}}c@{\hspace{0.8em}}l@{\hspace{0.15em}}c@{}}
        &
        \multicolumn{3}{c}{
        \hspace{0.1em}
        \includegraphics[scale=0.38]{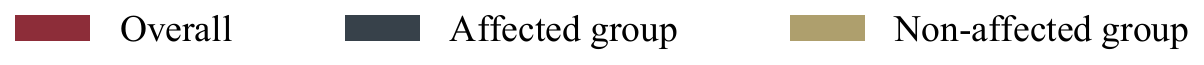}  
        } \\
        & \hspace{0.6em} Ranking quality $\uparrow$ (ideal: $1$)
        &
        & \hspace{0.6em} DTR $\uparrow$ (ideal: $1$)
        \\
        \rotatebox[origin=lt]{90}{\hspace{1.2em} \small NDCG@10}
        &
        \includegraphics[width=0.22\textwidth]{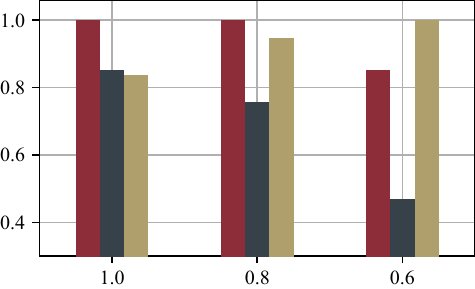}
        & \rotatebox[origin=lt]{90}{\hspace{0.25em} \small Exposure ratio}
        &
        \includegraphics[width=0.21\textwidth]{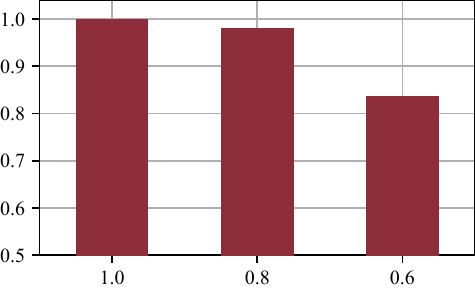}
        \\
        & \small Group underestimation factor
        &
        & \small Group underestimation factor
        
    \end{tabular}
    }
    \caption{The impact of group bias on ranking performance (left) and DTR fairness metric (right) for the Yahoo! dataset.}
    \label{fig:introndcg}
\end{figure}

\myparagraph{Correction.}
We follow previous work on implicit bias~\citep{kleinberg2018selection} and model group bias with a multiplicative factor.
This allows us to use \acfi{IPS} to correct for bias~\citep{joachims2017unbiased, wang2018position}.
Measuring group bias, however, is not as simple as measuring position bias.
We argue that group bias measurement requires assumptions on the distribution of the true utility scores. 
Following~\citep{kleinberg2018selection, Celis2020Interventions, Emelianov2020On}, one can assume that the true utility scores of both groups come from the same distribution.
However, since \emph{equity} (i.e., merit-based fairness) is based on the premise that exposure should be distributed based on utility, assuming that the utility of different groups is equal for each query, means that different groups should receive equal exposure all the time, which means equality.
To counter this equality-instead-of-equity issue, we propose to consider a set of queries (instead of one query) with their corresponding associated items and measure the group bias parameter over this aggregated set of scores.
We show that our correction method based on the above amortized measurement of the bias parameter is effective for restoring both the ranking quality and fairness metrics.

\myparagraph{Research questions.}
We answer the following questions:
\begin{enumerate}[label=(RQ\arabic*),leftmargin=*]
    \item What is the impact of group bias on the ranking quality and the true fairness metric of head and tail queries?
    \item How can we effectively correct for group bias, without substituting equality for equity?
\end{enumerate}

\noindent%
The rest of the paper is organized as follows. We first list several existing user studies showing that group bias exists in user interactions (\Cref{sec:relatedwork}).
In \Cref{sec:groupbias} we give a formal definition of group bias.
Answering the research questions starts from \Cref{sec:theory} where we derive mathematical formulas for the impact of group bias on the ranking quality and fairness metrics.
To do so, we put some simplifying assumptions such as binary latent relevance and uniform distribution of the observed clicks (i.e., attractiveness) over the items.
These assumptions may not be precisely met in real-world datasets, but we believe the derived formulas that are based on these assumptions are beneficial in giving insights into the impact of group bias on ranking.
To close the gap between theory and practice, we perform extensive experiments on various real-world datasets and show the negative impact of group bias on the ranking quality and fairness metrics in \Cref{sec:experiments:impact}.
Regarding the second research question, in \Cref{sec:correction} we discuss the challenges and solutions for measuring and correcting for group bias from user interactions.
Then we experimentally examine the effectiveness of our theoretical propositions for group bias correction in \Cref{sec:experiment}.
Finally, \Cref{sec:conclusion} concludes the paper.

\section{Related Work}
\label{sec:relatedwork}
An increasing number of studies indicate the existence of group bias.
Implicit bias, a special case of group bias, in which the preference of one group over the other is unintentional, has been widely studied in human behavior studies \citep[e.g.,][]{jolls2006law, brownstein2015implicit, fitzgerald2017implicit}.
More recently, implicit bias has been formalized in the set selection problem~\citep{kleinberg2018selection} and extended to the ranking scenario~\citep{Celis2020Interventions, Emelianov2020On}.

Here, we list a small number of example studies indicating that group membership affects users' judgment.
In \citep{kay2015unequal}, it is observed in a user study that in a career search, results that are consistent with stereotypes for a career are rated higher.
\citet{suhr2021does} pose the important question of whether ``fair ranking improve[s] minority outcomes?'' and arrive at the result that persistent gender preferences of employers can limit the effectiveness of fair ranking algorithms.
\citet{krieg2022perceived} in their user study on gender sensitive queries from~\citep{krieg2022grep} show that perceived gender bias affects judgment.
In \citep{vlasceanu2022propagation} it is shown that societal and algorithmic gender bias affect each other: the algorithmic outputs of search engines track pre-existing societal-level gender biases; and, at the same time, exposure of users to these biased results shape users’ cognitive concepts and decisions.
\citet{liu2023reducing} study gender bias in the evaluation and selection of future leaders.\looseness=-1

We study the impact of group bias on ranking and fairness measures and propose a method to correct for it.
Closest to our paper are~\citep{Celis2020Interventions, mehrotra2022selection}, which show that implicit bias degrades ranking quality and that by ensuring equality of exposure, the ranking quality can be improved.
What we add on top of this work is to provide a formalization of the change of ranking and merit-based fairness metrics as a result of group bias.
We also provide experimental analyses of the impact of group bias on the output of an \ac{LTR} model.

The idea of our amortized correction to counter sparsity and equality-instead-of-equity has similarities to the notion of amortized fairness of exposure~\citep{biega2018equity}, where the exposure and utility of individuals (or groups) are aggregated across multiple queries and the fairness metric is calculated according to the aggregated exposure and utility.
This corresponds to fairness evaluation.
In contrast, we aggregate the items associated with multiple queries to find the group bias parameter that minimizes the distance between the utility distribution of the affected and non-affected groups.
This corresponds to group bias correction.

Recently, \citet{Balagopalan2023therole} have provided a set of desired criteria that relevance scores should satisfy in order to meaningfully guide fairness interventions, and show that, e.g., learning from click data can violate some of the desiderata of relevance for fair ranking.
Though their focus is not on group bias, their approach to showing that a violation of assumptions on the utility of items can misguide fairness interventions has similarities to this work.

\begin{remark}
    Our terminology of \emph{group bias} should not be confused with \emph{in-group bias}, where a user favors members from their own group over out-of-group members~\citep{Zdaniuk2001group, Molenberghs2013the, o2023racial}, or \emph{conformity bias}, where users tend to behave similarly to the others in a group~\citep{chen2023bias, karomat2023unconscious}.
    Hence, issues such as loyalty versus neutrality are out of scope.
\end{remark}


\section{Group Membership Bias}
\label{sec:groupbias}

As discussed in \cref{sec:relatedwork}, prior work shows that the judgment of a user about the relevance of an item may be affected by the item's group.
Either unconsciously (as in implicit bias~\citep{kleinberg2018selection, Emelianov2020On}) or due to stereotypical bias~\citep{kay2015unequal, saxena2021exploring, krieg2022perceived}, users tend to rate one group higher than the other.
In this paper, we do not aim to deal with the source of this biased behavior and only focus on its impact on algorithms and metrics.
We call this behavior the \emph{group membership bias}.
Following the well-known examination hypothesis~\citep{chuklin2015click} that says that an item is clicked by a user if it is
\begin{enumerate*}[label=(\roman*)]
    \item examined and
    \item found attractive
\end{enumerate*}
by that user, one can attribute group bias to the attractiveness part:
\begin{equation}
\label{eq:attraction}
    P(A \mid q, d, g) = f\big(P(R \mid q, d), g\big),
\end{equation}
where $A$, $R$, $q$, and $d$ stand for attractiveness, relevance, query, and document, respectively, and $g$ is the group of which $d$ is a member.
Eq.~\eqref{eq:attraction} states that the attraction of an item to the user not only depends on the item's true relevance to the query, but is also a function of the item's group.
Following the literature on implicit bias and gender bias~\citep{kleinberg2018selection, saxena2021exploring}, we assume this dependency to have a multiplicative form as follows:\looseness=-1
\begin{equation}
\label{eq:implicitbias}
    P(A \mid q, d, g) = \beta_{g} \cdot P(R \mid q, d).
\end{equation}
We call $\beta_{g}$ the group underestimation factor, or \emph{group propensity}.
Our implicit assumption is that clicks for the affected group are missing completely at random (MCAR) with $\beta_g$ being the missingness probability. This brings the bias-correction problem back to \ac{IPS} correction, since the clicks missing due to group bias are analogous to clicks missing based on position.
Notice that $\beta_g$ is not necessarily fixed across all queries.
For instance, in the Grep-BiasIR dataset \citep{krieg2022grep}, bias-sensitive queries have different expected gender stereotypes, and users are expected to be biased toward the respective gender stereotype of each query.

\begin{remark}
For simplicity of notation, we consider one sensitive attribute here.
Extending our discussions to more attributes with intersectional groups is possible using the formulation in~ \citep{Celis2020Interventions,mehrotra2022selection}.
\end{remark}

\begin{remark}
    Group bias is only meaningful in settings where there is a global (objective) notion of relevance in contrast to personalized (subjective) relevance.
    For example, when ranking students for college entrance, or selecting among job applicants hiring, it is desired to base the decision on an \emph{unbiased} and \emph{unpersonalized} criterion.
    On the other hand, when searching for a roommate, the relevance is subjective and the group bias concern does not apply.
\end{remark}

\subsection{Ranking Regimes}
We distinguish between two \ac{LTR} regimes:
\begin{enumerate*}[label=(\roman*)]
    \item tabular search for head queries; and
    \item general LTR model for other queries.
\end{enumerate*}
Note that the majority of previous studies focused only on the general LTR regime \citep[e.g.,][]{morik2020Controlling, singh2019policy}, or the tabular regime \citep[e.g.,][]{singh-2018-fairness, biega2018equity}.
In contrast, we follow~\citep{vardasbi2022probabilistic} and consider both LTR regimes.

\myparagraph{Tabular search for head queries.}
In tabular search, users' historical interactions with head queries are directly used to estimate items' utility~\citep{katariya2016dcm,kveton2015cascading,lagree2016multiple,lattimore2019bandit,li2020bubblerank,radlinski2008learning,zoghi2016click}.
In this regime, we assume that $P(A)$ can accurately be inferred from clicks: other types of bias such as position and trust bias are corrected for and only group bias remains in the signals.
Our theoretical results on the impact of group bias on different metrics, 
lie in this regime.\looseness=-1 

\myparagraph{General LTR model for tail queries.}
For new queries and ones that a tabular model is not confident about, an LTR model is used.
We assume that this LTR model is trained over accurate estimations of $P(A)$ from the head queries.
Writing $r_{q,d}$ for the relevance of item $d$ to query $q$, ranking metrics per query can usually be expressed as: $\Delta_q = \textstyle \sum_{d} \lambda_{q,d} \cdot r_{q,d}$,
where $\lambda$ is a metric-specific coefficient.
Using attractiveness instead of the true relevance to train an LTR model, means that instead of $\Delta_q$, the following metric is being optimized:
\begin{equation}
    \label{eq:biasedltr}
    \hat{\Delta}_q = \sum_{d} \lambda_{q,d} \cdot P(A \mid q,d) \stackrel{\eqref{eq:implicitbias}}{=} \sum_{g} \beta_g \sum_{d\in g} \lambda_{q,d} \cdot P(R \mid q,d).
\end{equation}
Comparing $\Delta_q$ and \cref{eq:biasedltr}, it is easy to see that $\hat{\Delta}_q$ is biased:
\begin{equation}
    \mExpect{R}{\Delta_q} = \textstyle \sum_{d} \lambda_{q,d} \cdot P(R \mid q,d) \neq \hat{\Delta}_q,
\end{equation}
unless $\beta_g$ is equal across all groups, i.e., there is no group bias.


\section{Theoretical Results}
\label{sec:theory} 
We assume that there are two groups $G_{\affectedscript}$ (affected) and $G_{\notaffectedscript}$ (non-affected), with $\beta_{\affectedscript} < 1$ and $\beta_{\notaffectedscript}=1$.
We further assume binary latent relevance, and that within each group, relevant items are more attractive than non-relevant items:
\begin{equation}
    \label{eq:monotoneattractiveness}
    \mbox{}\hspace*{-2mm}
    \forall d,d' \!\in\! G_i, \text{ if } 
    r_d=1 \text{ and } r_{d'}=0, \,\text{then}
    ~ P(A \mid q, d) \!>\! P(A \mid q, d').
    \hspace*{-2mm}\mbox{}
\end{equation}
For brevity, we write $a_d$ for the attractiveness probability of item $d$, assuming that there is no confusion about the query.
Let the number of candidate items for a query be $n$, out of which $n_{\affectedscript}$ and $n_{\notaffectedscript}$ items belong to groups $G_{\affectedscript}$ and $G_{\notaffectedscript}$, respectively.
We indicate the number of relevant items with $n^+_{\affectedscript}$ and $n^+_{\notaffectedscript}$.
To assess the impact of group bias on different metrics, we measure the change in the target metric when the observable attractiveness probabilities are considered as a proxy for the true relevance scores.

\subsection{Ranking Quality}
For  ranking quality, we calculate the \ac{NDCG} of the list obtained from sorting items based on their attractiveness probability and measure its deviation from the ideal \ac{NDCG}, i.e., $1$.
By definition, group bias affects the attractiveness probabilities for $G_{\affectedscript}$ only.
Let $a^*$ be the minimum attractiveness value for the relevant items of $G_{\notaffectedscript}$:\looseness=-1
\begin{equation}
\label{eq:astar}
    a^* = \min_{d\in G_{\notaffectedscript}} \{a_d \mid r_d = 1\}.
\end{equation}
Items in $G_{\affectedscript}$ with higher attractiveness values than $a^*$ are ranked correctly with probability $1$: Group bias has dampened their attractiveness probabilities, but still none of the non-relevant items is ranked higher than them.
We define an auxiliary random variable $\nu$ to be the fraction of relevant items from the affected group $G_{\affectedscript}$ that are ranked correctly with probability $1$:
\begin{equation}
    \label{eq:nu}
    \nu = \tfrac
    { |\{d \mid d \in G_{\affectedscript} \land r_d = 1 \land a_d > a^*\}| }
    { |\{d \mid d \in G_{\affectedscript} \land r_d = 1\}| }.
\end{equation}
For uniformly distributed scores in the interval of $[0,1]$, we have $\mExpect{}{\nu}=\max ( 2-{\beta_{\affectedscript}^{-1}}, 0 )$.

\begin{theorem}
In the presence of group bias, for uniformly distributed attractiveness scores, the change in the \ac{NDCG} of the list, sorted based on items' attractiveness, can be approximated by a linear function of $\mExpect{}{\nu}$, i.e., the fraction of affected relevant items that are still as attractive as the non-affected relevant items.
\end{theorem}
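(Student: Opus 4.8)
The plan is to write $\text{NDCG} = \text{DCG}/\text{IDCG}$ with binary gains, so that the ideal ranking (all $N := n^+_{\affectedscript} + n^+_{\notaffectedscript}$ relevant items on top) has $\text{IDCG} = \sum_{i=1}^{N} 1/\log_2(i+1)$, a quantity independent of $\beta_{\affectedscript}$. The change I want is $1 - \text{NDCG} = (\text{IDCG} - \text{DCG})/\text{IDCG}$, so it suffices to track how the $\text{DCG}$ of the attractiveness-sorted list depends on $\nu$, and then divide by the constant $\text{IDCG}$.

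First I would exploit the structural fact established just before the statement: under the uniform-score setting every item with attractiveness above $a^*$ is relevant, while all $n^+_{\notaffectedscript}$ non-affected relevant items together with the $\nu\, n^+_{\affectedscript}$ affected relevant items counted by $\nu$ lie above $a^*$. Hence the top $m := n^+_{\notaffectedscript} + \nu\, n^+_{\affectedscript}$ positions of the sorted list are occupied exactly by relevant items, and these positions coincide with the first $m$ positions of the ideal list. This lets me split $\text{DCG} = G(m) + T$, where $G(m) := \sum_{i=1}^{m} 1/\log_2(i+1)$ is the exact contribution of the perfectly placed top block and $T$ collects the contribution of the remaining $(1-\nu)\, n^+_{\affectedscript}$ affected relevant items, which have been pushed below $a^*$ and interleaved with the non-relevant items of both groups.

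Next, the change reduces to $1 - \text{NDCG} = \big(\sum_{i=m+1}^{N} 1/\log_2(i+1) - T\big)/\text{IDCG}$, in which the only $\nu$-dependence sits in the cut-off $m = n^+_{\notaffectedscript} + \nu\, n^+_{\affectedscript}$ and in the tail term $T$. To extract linearity I would treat the discount $G$ as a smooth function of a continuous index and linearize it over the operating window: a first-order (Taylor / mean-value) expansion gives $G(m) \approx G(m_0) + G'(m_0)\, n^+_{\affectedscript}\, (\nu - \nu_0)$ with $G'(m) \approx 1/\log_2(m+1)$, so the top-block term is affine in $\nu$ with slope proportional to $n^+_{\affectedscript}$. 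Since $\text{IDCG}$ is constant in $\nu$, dividing through preserves this affine form. Taking expectations over the random scores and using linearity of expectation together with the closed form $\mExpect{}{\nu} = \max(2 - \beta_{\affectedscript}^{-1}, 0)$ stated above, the expected change in NDCG becomes an affine function of $\mExpect{}{\nu}$, which is the claim.

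The hard part will be the tail term $T$: the depths of the $(1-\nu)\, n^+_{\affectedscript}$ misranked relevant items depend on the random interleaving with the non-relevant items of both groups, and the logarithmic discount is genuinely nonlinear, so the linearity can only be approximate. Controlling this — either by bounding $T$ (the misplaced items sit at deep, heavily discounted positions) or by showing that, under the same smooth-discount approximation and an averaged-position argument for the interleaving, $\mExpect{}{T}$ is itself close to affine in $\nu$ — is precisely where the ``approximated by a linear function'' qualifier does its work. The perfectly placed top block, by contrast, contributes the clean affine term exactly, up to the linearization of $G$.
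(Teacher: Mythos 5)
Your decomposition is essentially the paper's: a top block of $m = n^+_{\notaffectedscript} + \nu n^+_{\affectedscript}$ correctly placed relevant items contributing gain $1$ per position, plus an interleaved region in which the remaining $(1-\nu)n^+_{\affectedscript}$ affected relevant items are mixed with non-relevant items, and you correctly note that IDCG is constant in $\nu$ and that $\mExpect{}{\nu}=\max(2-\beta_{\affectedscript}^{-1},0)$ closes the argument in expectation. The genuine gap is exactly where you flagged it: you leave the tail term $T$ unresolved, offering either a bound or an unspecified ``averaged-position argument,'' whereas the paper carries that argument out and it is the substantive content of the proof. Under the uniform-score assumption the interleaving within the window of positions $m+1$ through $n_{\notaffectedscript}+n^+_{\affectedscript}$ is exchangeable, so each position there has expected gain $\xi_i = (1-\nu)n^+_{\affectedscript}/\bigl(n_{\notaffectedscript}-n^+_{\notaffectedscript}+(1-\nu)n^+_{\affectedscript}\bigr)$, which yields $\mExpect{}{DCG}$ in closed form as a function of $\nu$. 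Note also a concrete discrepancy in your window: you say the misranked relevant items interleave with ``the non-relevant items of both groups,'' but the paper's window has length $(n_{\notaffectedscript}-n^+_{\notaffectedscript})+(1-\nu)n^+_{\affectedscript}$ --- only the non-relevant items of $G_{\notaffectedscript}$ participate; the non-relevant affected items, dampened by $\beta_{\affectedscript}$ and dominated within-group by \cref{eq:monotoneattractiveness}, sit below the window. Getting this wrong changes both the window length and the denominator of $\xi_i$, hence the slope of the resulting approximation. Merely bounding $T$ would not suffice either: $T$ is not an error term but a first-order contribution whose $\nu$-dependence (through both the window boundaries and $\xi_i$, which is nonlinear in $\nu$) must be accounted for.

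On the final step, be aware that the theorem's ``approximated by a linear function'' is established in the paper \emph{numerically}: the closed-form expression in \cref{eq:averagedcg} is fit by a linear function of $\nu$ with small relative error (at most $5\%$ in a top-$20$ setup), not by a Taylor or mean-value expansion. Your linearization of $G(m)$ handles only the top block; since you never quantify the tail's deviation from affinity, your conclusion that the expected change in NDCG is affine in $\mExpect{}{\nu}$ is asserted rather than derived. (There is also a small slip in passing to expectations: the closed form is affine in $\nu$ only approximately, so $\mExpect{}{\text{NDCG}}$ being affine in $\mExpect{}{\nu}$ inherits that same approximation error, which is fine for the theorem as stated but should be said.) In short: right structure, right structural facts, but the proof is incomplete precisely at the interleaving computation that the paper resolves via $\xi_i$ and the subsequent numerical linear fit.
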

\begin{proof}
Our monotonicity assumption of the within-group attractiveness (\cref{eq:monotoneattractiveness}) ensures that no relevant item is ranked lower than non-relevant items of $G_{\affectedscript}$.
This means that the $1-\nu$ fraction of the affected relevant items lies somewhere between $n^+_{\notaffectedscript}+\nu n^+_{\affectedscript}$ and $n_{\notaffectedscript} + n^+_{\affectedscript}$ positions.
The expected \ac{DCG} of the list would be as follows:
\begin{equation}
\label{eq:averagedcg}
    \mExpect{}{DCG}=\sum_{i=1}^{n^+_{\notaffectedscript}+\nu n^+_{\affectedscript}} 
            \frac{1}{\log(1+i)} +
        \sum_{i=n^+_{\notaffectedscript}+\nu n^+_{\affectedscript}+1}^{n_{\notaffectedscript} + n^+_{\affectedscript}}
            \frac{\xi_i}{\log(1+i)},
\end{equation}
where $\xi_i$ depends on the distribution of the attractiveness scores.
For a uniform distribution, we have:
\begin{equation}
    \xi_i = \tfrac
    {(1-\nu)n^+_{\affectedscript}}
    {n_{\notaffectedscript} -n^+_{\notaffectedscript}+(1-\nu) n^+_{\affectedscript}}.
\end{equation}
Finally, using numerical analysis to approximate the average \ac{DCG} in \cref{eq:averagedcg} by a linear function of $\nu$, leads to a small approximation error, e.g., a relative error of at most $5\%$ in a top-$20$ setup.
\end{proof}

\subsection{Merit-Based Fairness Metrics}
Next, to see the impact of group bias on fairness metrics, we analyze two well-known merit-based fairness of exposure metrics, viz.\ \ac{EEL}~\citep{diaz2020evaluating, biega-2020-overview} and \ac{DTR}~\citep{singh-2018-fairness}.
For both, we calculate the target exposure in two cases: 
\begin{enumerate*}[label=(\roman*)]
\item the full information case where the true relevance scores are used to compute target exposure, and 
\item the group biased case where attractiveness probabilities are used as proxies for relevance to compute target exposure.
\end{enumerate*}
By \emph{change in true target exposure} we mean the difference between these two cases.

\subsubsection{EEL}
In the next theorem, we calculate the change in the target exposure of $G_{\notaffectedscript}$ as a result of group bias.
\begin{theorem}
    In the presence of group bias, assuming the \ac{PBM} as the user browsing model with logarithmic decay of exposure as in \ac{DCG},\footnote{Here we follow~\citep{trec-fair-ranking-2021, vardasbi2022probabilistic}. Similar analyses can be performed for other exposure models.} the change in the target exposure of \ac{EEL} can be approximated as follows:
    \begin{equation}
    \Delta(\text{EEL}) = c \cdot \log\left(\tfrac{\text{\# True relevant items}}
    {\text{\# Perceived relevant items}}\right),
\end{equation}
where $c$ is a constant depending on $n^+_{\notaffectedscript}$, $n_{\notaffectedscript}$, and $n^+_{\notaffectedscript}$.
\end{theorem}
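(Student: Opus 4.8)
The plan is to treat the \ac{EEL} ideal (target) exposure of the non-affected group as a function of a single integer parameter, namely the total number of items deemed relevant, and then to track how $G_{\notaffectedscript}$'s share of that target exposure moves as this number shrinks from its true value to its perceived value under group bias.

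First I would recall the \ac{EEL} target exposure under the \ac{PBM} with the logarithmic (\ac{DCG}-style) discount $b_i = 1/\log(1+i)$. Under binary relevance and the equal-exposure-for-equal-merit principle, the ideal policy randomizes over all rankings that place every relevant item above every non-relevant one; hence if $m$ items are relevant, each relevant item receives the average top-$m$ exposure $\tfrac{1}{m}\sum_{i=1}^{m} b_i$ and each non-relevant item the average over the remaining positions $\tfrac{1}{n-m}\sum_{i=m+1}^{n} b_i$. I would then write the target exposure of $G_{\notaffectedscript}$ as a function of $m$, namely $E^*_{\notaffectedscript}(m) = n^+_{\notaffectedscript}\cdot\tfrac{1}{m}\sum_{i=1}^{m} b_i + (n_{\notaffectedscript}-n^+_{\notaffectedscript})\cdot\tfrac{1}{n-m}\sum_{i=m+1}^{n} b_i$. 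The key structural observation is that since $\beta_{\notaffectedscript}=1$, the non-affected group's own relevant and non-relevant counts are unchanged between the two cases; only the global relevant count $m$ differs.

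The two cases thus differ only in the argument $m$. In the full-information case $m = M_{\text{true}} = n^+_{\affectedscript}+n^+_{\notaffectedscript}$ (the number of truly relevant items); in the group-biased case the dampening factor $\beta_{\affectedscript}<1$ pushes a fraction of the affected relevant items below the relevance threshold, so the perceived count $M_{\text{perc}}<M_{\text{true}}$ is governed by $\beta_{\affectedscript}$ through the same $\nu$-accounting as the previous theorem (so that $M_{\text{perc}}$ is a count of the form $n^+_{\notaffectedscript}+\mExpect{}{\nu}\,n^+_{\affectedscript}$). The change in target exposure is then $\Delta(\text{EEL}) = E^*_{\notaffectedscript}(M_{\text{true}}) - E^*_{\notaffectedscript}(M_{\text{perc}})$, and the analytic heart of the proof is to show that this difference collapses to $c\cdot\log\mpar{M_{\text{true}}/M_{\text{perc}}}$. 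The route I would take is to linearize $E^*_{\notaffectedscript}$ in $m$ and argue that, over the relevant range, its marginal $\mathrm{d}E^*_{\notaffectedscript}/\mathrm{d}m$ scales like $c/m$; integrating $\int_{M_{\text{perc}}}^{M_{\text{true}}} (c/m)\,\mathrm{d}m$ produces the stated logarithm of the ratio, with $c$ absorbing the size constants $n_{\notaffectedscript}$, $n^+_{\notaffectedscript}$, $n^+_{\affectedscript}$.

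I expect the main obstacle to be exactly this last step: the discount $1/\log(1+i)$ has no elementary antiderivative (its continuous analogue is the logarithmic integral), so the clean $\log$-ratio form is genuinely an approximation rather than an identity. I would therefore justify the $c/m$ scaling via a controlled expansion of the partial sums $\sum_i b_i$ and bound the residual numerically, in the same spirit as the at-most-$5\%$ top-$k$ error claimed for the \ac{NDCG} theorem. I would also separately treat the degenerate strong-bias regime $\mExpect{}{\nu}=0$, where $M_{\text{perc}}$ hits its floor $n^+_{\notaffectedscript}$ and the log-ratio is largest, to confirm the approximation remains well-behaved at the boundary.
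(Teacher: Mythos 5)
Your proposal matches the paper's proof essentially step for step: the paper likewise writes the non-affected group's target exposure as $n^+_{\notaffectedscript}$ times the average top-$m$ exposure plus $(n_{\notaffectedscript}-n^+_{\notaffectedscript})$ times the average exposure over the remaining positions, sets $m=n^+_{\notaffectedscript}+n^+_{\affectedscript}$ (true) versus $m=n^+_{\notaffectedscript}+\nu n^+_{\affectedscript}$ (perceived, via the same $\nu$-accounting you invoke), and obtains the log-ratio by numerically approximating both partial-sum averages as affine functions of $\log m$ (e.g., slopes $\alpha=-0.146$, $\alpha'=-0.022$ with at most $5\%$ relative error for $n=20$) --- which is exactly your $\mathrm{d}E^*_{\notaffectedscript}/\mathrm{d}m \sim c/m$ linearization in integrated form. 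The only cosmetic difference is that the paper multiplies the one-group change by $2$ (since the total exposure is fixed, the affected group's change is equal and opposite), a factor your unspecified constant $c$ simply absorbs.
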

\begin{proof}
As we are working with two groups, and the sum of the group exposures is fixed, to measure the change in the target exposure vector, it is sufficient to measure the change in the target exposure of one group and multiply it by $2$.

To compute the expected exposure for EEL, the utility values should be discrete.
With a slight abuse of notation, we assume that $a^*$ (instead of \cref{eq:astar}) is the threshold used for discretization of the attractiveness probabilities,\footnote{Usually, $a^*=0.5$ is the least controversial threshold.} and we use $\Bar{a}_d$ for the discretized value of $a_d$.
Since $\beta_{\notaffectedscript}=1$, we assume that $\Bar{a}_d=r_d$ for $d\in G_{\notaffectedscript}$.
However, for the affected items, because $\beta_{\affectedscript} < 1$, not all the scores are necessarily correct.
We re-use $\nu$ from \cref{eq:nu} to show the fraction of affected relevant items that are still recognized as relevant.

For the average exposure of the relevant and non-relevant items we use the following two approximations:
\begin{align}
\label{eq:exposureapprox}
    \tfrac{1}{m}\textstyle \sum_{i=1}^{m}\tfrac{1}{\log(1+i)} & \approx \alpha \log(m) + c
\\
\label{eq:exposureapprox2}
    \tfrac{1}{n-m}\textstyle \sum_{i=m}^{n}\tfrac{1}{\log(1+i)} & \approx \alpha' \log(m) + c',
\end{align}
where $\alpha$ and $\alpha'$ are constants, obtained by numerical analysis.
For example, for $n=20$, $\alpha=-0.146$ and $\alpha'=-0.022$ lead to relative approximation errors of at most $5\%$.
In the full information case, there are a total of $n^+_{\notaffectedscript}+n^+_{\affectedscript}$ relevant items, i.e., $m=n^+_{\notaffectedscript}+n^+_{\affectedscript}$ in~\cref{eq:exposureapprox} and~\eqref{eq:exposureapprox2}.
But with group bias, only $m=n^+_{\notaffectedscript}+\nu n^+_{\affectedscript}$ of the items are recognized as relevant.
Consequently, the change in the target exposure as a result of group bias can be approximated as follows:\looseness=-1
\begin{equation*}
\label{eq:deltaeel}
    \Delta(\text{EEL}) = 2\big(\alpha n^+_{\notaffectedscript} + \alpha'(n_{\notaffectedscript}-n^+_{\notaffectedscript})\big)
    \log\left(\tfrac{n^+_{\notaffectedscript}+n^+_{\affectedscript}}
    {n^+_{\notaffectedscript}+\nu n^+_{\affectedscript}}\right).
    \qedhere
\end{equation*}
\renewcommand{\qedsymbol}{}
\end{proof}

\subsubsection{DTR}
DTR is a multiplicative metric.
To have a meaningful measure for the change in DTR in the presence of group bias, one has to compute the ratio of the target expected exposure in the full information ($E$) and group-biased ($\Tilde{E}$) settings.
\begin{theorem}
In the presence of group bias, the change in the target exposure of \ac{DTR}, equals the fraction of affected relevant items that are still as attractive as the non-affected relevant items.
\end{theorem}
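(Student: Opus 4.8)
The plan is to exploit the fact that DTR is a scale-invariant ratio metric, so that the proportionality constant relating target exposure to utility drops out and only the per-group utility ratio between the biased and full-information settings survives. Concretely, DTR compares the exposure-to-utility ratios of the two groups, and its target (fairest) exposure is the one that equalizes these ratios, i.e., the target exposure of each group is proportional to that group's (perceived) utility. I would therefore reduce the claim to tracking how the perceived utility of each group changes under group bias.

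First I would note that, because $\beta_{\notaffectedscript}=1$, the attractiveness scores of $G_{\notaffectedscript}$ coincide with its true relevance, so its utility is unchanged between the two settings: $\tilde{U}_{\notaffectedscript}=U_{\notaffectedscript}=n^+_{\notaffectedscript}$. Next, for the affected group I would reuse the discretization argument behind $\nu$ (as in \cref{eq:nu} and the EEL proof): relevant items of $G_{\affectedscript}$ whose dampened attractiveness still exceeds $a^*$ remain recognized as relevant, while the rest are demoted below the threshold. Hence the perceived utility of the affected group is $\tilde{U}_{\affectedscript}=\nu\, n^+_{\affectedscript}$, against a true utility of $U_{\affectedscript}=n^+_{\affectedscript}$.

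Then I would form the target exposures. Since the DTR-optimal exposure of a group is proportional to its utility, writing $E$ and $\tilde{E}$ for the full-information and group-biased target exposures, the affected group's target exposure scales by exactly $\tilde{U}_{\affectedscript}/U_{\affectedscript}$, while the non-affected group's is unchanged. The scale invariance of the ratio metric guarantees that the otherwise setting-dependent normalization constant cancels when we take the ratio $\tilde{E}/E$, leaving $\tilde{E}_{\affectedscript}/E_{\affectedscript}=\nu$. Equivalently, evaluating the \emph{true} DTR of the biased target exposure gives $\frac{\tilde{E}_{\affectedscript}/\tilde{E}_{\notaffectedscript}}{U_{\affectedscript}/U_{\notaffectedscript}}=\frac{\nu n^+_{\affectedscript}/n^+_{\notaffectedscript}}{n^+_{\affectedscript}/n^+_{\notaffectedscript}}=\nu$, so the deviation from the ideal DTR of $1$ is governed entirely by $\nu$.

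The main obstacle I anticipate is not the algebra but pinning down the right notion of ``change'' for a multiplicative metric: unlike EEL, DTR has no fixed additive scale, so normalizing both target exposures to a common total exposure would introduce a spurious factor $(\nu n^+_{\affectedscript}+n^+_{\notaffectedscript})/(n^+_{\affectedscript}+n^+_{\notaffectedscript})$ and destroy the clean $\nu$. The key step is therefore to argue that the correct, convention-independent quantity is the ratio of exposure-to-utility ratios, in which this normalization cancels; once that reduction is justified, establishing $\tilde{U}_{\affectedscript}=\nu n^+_{\affectedscript}$ via the $a^*$ threshold is the only remaining substantive point.
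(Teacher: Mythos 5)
Your proposal is correct and follows essentially the same route as the paper: the paper likewise defines the change as $\rho(\text{DTR}) = \frac{\tilde{E}_{\affectedscript}}{\tilde{E}_{\notaffectedscript}} \cdot \frac{E_{\notaffectedscript}}{E_{\affectedscript}}$, uses binary relevance so that each group's utility is its count of (perceived) relevant items, takes the affected group's perceived utility to be $\nu\, n^+_{\affectedscript}$ via the same $a^*$-threshold argument, and cancels to obtain $\nu$. Your extra discussion of why the ratio-of-ratios is the convention-independent notion of change is a reasonable elaboration of what the paper simply asserts in the theorem's setup, but it does not change the substance of the argument.
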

\begin{proof}
Using the same notation as in previous sections, and noting that because of the binary relevance assumption the utility of each group is equal to the number of its relevant items, this ratio is computed as follows:
\begin{equation*}
    \rho(\text{DTR}) = 
    \tfrac{\Tilde{E}_{\affectedscript}}{\Tilde{E}_{\notaffectedscript}} \cdot \tfrac{{E}_{\notaffectedscript}}{{E}_{\affectedscript}} =
    \tfrac{\nu n^+_{\notaffectedscript}+\nu n^+_{\affectedscript}}{n^+_{\notaffectedscript}+n^+_{\affectedscript}} =
    \nu.
    \qedhere
\end{equation*}
\renewcommand{\qedsymbol}{}
\end{proof}


\myparagraph{Upshot.} 
In this section we derived formulisations of the negative effect of group bias on ranking. In proving our theorems, we relied on some assumptions on the relevance and distribution of attractiveness.
The objective of this section is to provide theoretical insights on the impact of group bias on ranking, i.e., RQ1.
Since our assumptions here may not be precisely met in real-world scenarios, in \Cref{sec:experiments:impact} we get back to RQ1 in a semi-synthesized experimental setup with real-world utility scores and show that, though not necessarily in complete alignment with the formulisations of this section, the message is the same: group bias \emph{does} hurt the quality and fairness of ranking in real-world scenarios, and the degradation gets stronger with more severe group bias.

\section{Group Bias Correction}
\label{sec:correction}
Our multiplicative formulation of group bias in \cref{eq:implicitbias} allows us to use \ac{IPS} to correct for group bias, once we know the value of the propensity $\beta$.
The unbiasedness proof of IPS for this case is exactly the same as that of position bias in~\citep{joachims2017unbiased, wang2018position}.
However, similar to position bias, the unbiasedness proof depends entirely on an accurate estimation of the bias parameter~\citep{vardasbi2021mixture}.

Unlike position bias, group bias cannot be measured by intervening in the ranked list of items.
The reason is that the bias attribute in position bias can be changed without modifying the content of the items: Each item can be shown in different positions, hence, detaching propensity from relevance.
In contrast, for group bias, the bias attribute, i.e., group membership, is a characteristic of the item that \emph{cannot} be changed.
As such, users' interactions with items cannot be measured for different values of the bias attribute.
Instead, to measure group bias, previous work on implicit bias (with the same problem formulation as \cref{eq:implicitbias}), assumes that the utility scores of different groups come from the same distribution~\citep{kleinberg2018selection, Celis2020Interventions, Emelianov2020On}.
We use the same assumption, but extend it to an amortized criterion.

\subsection{Measurement}
\label{sec:correction:measurement}
Let $\mathbf{A}_{\affectedscript}$ and $\mathbf{A}_{\notaffectedscript}$ be the set of (observed) attractiveness scores, and $\mathbf{R}_{\affectedscript}$ and $\mathbf{R}_{\notaffectedscript}$ the set of (latent) relevance scores of $G_{\affectedscript}$ and $G_{\notaffectedscript}$, respectively.
Let $\Delta_{\mathbb{D}}$ be a non-parametric test for the equality of one-dimensional probability distributions such as the \ac{KS}~\citep{massey1951kolmogorov} test.
The assumption that the utility scores of the two groups come from the same distribution means that:
\begin{equation}
\label{eq:samedistributionassumption}
    \lim_{
        |\mathbf{R}_{\affectedscript}|, 
        |\mathbf{R}_{\notaffectedscript}| 
        \rightarrow \infty
        } 
    \Delta_{
        \mathbb{D}
        }
        \left(
            \mathbf{R}_{\affectedscript}, 
            \mathbf{R}_{\notaffectedscript}
        \right) = 0.
\end{equation}
Assuming \cref{eq:implicitbias} to be the relation between $\mathbf{A}_{\affectedscript}$ and $\mathbf{R}_{\affectedscript}$, the best estimation of $\beta_{\affectedscript}$ is given by the following optimization problem:
\begin{equation}
\label{eq:bestbeta}
    \hat{\beta}_{\affectedscript} =
    \argmin_{
        \beta_{\affectedscript}
        }
    \Delta_{
        \mathbb{D}
        }
        \left(
            \tfrac{\mathbf{A}_{\affectedscript}}{\beta_{\affectedscript}}, 
            \mathbf{A}_{\notaffectedscript}
        \right),
\end{equation}
where $\mathbf{A}_{\affectedscript}/\beta_{\affectedscript}$ is the set obtained by dividing all the scores in $\mathbf{A}_{\affectedscript}$ by $\beta_{\affectedscript}$.
In our experiments, we choose the \ac{KS} test for $\Delta_{\mathbb{D}}$ and use grid search to solve the one-dimensional optimization of \cref{eq:bestbeta}.
\looseness=-1

It only remains to define how the sets $\mathbf{A}_{\affectedscript}$ and $\mathbf{A}_{\notaffectedscript}$ should be constructed.
Naively constructing these sets per query has two issues:
\begin{enumerate*}[label=(\roman*)]
    \item \emph{Sparsity}: Usually, we do not have a large number of items with non-zero exposure, associated with one query in real-world search engines. On the other hand, statistical tests measuring the distance between probability distributions work best with large numbers of data points. 
    \item \emph{Equality-instead-of-equity}: Assuming the same distribution for the utility of different groups can make the notion of \emph{equity} meaningless, as the implicit assumption in merit-based fairness metrics is that different groups may have different utilities. 
\end{enumerate*}

\begin{remark}
    Our assumption that the utility scores come from the same distribution derives from the principle of maximum entropy: unless there are explicit and justified reasons indicating that different groups have different utility score distributions, it is only reasonable to assume the same distribution.
    Prior work on implicit bias~\citep{kleinberg2018selection, Celis2020Interventions, Emelianov2020On} is based on this same assumption.
\end{remark}

\subsection{Amortized Correction}
\label{sec:correction:amortized}
Instead of using \cref{eq:bestbeta} per-query, in the amortized correction method we consider a set of queries with the same group propensity and aggregate the utility scores of their associated items. 
The sets $\mathbf{A}_{\affectedscript}$ and $\mathbf{A}_{\notaffectedscript}$ contain the attractiveness scores of these aggregated items.
This aggregation addresses both issues mentioned in \cref{sec:correction:measurement}:
\begin{enumerate*}[label=(\roman*)]
    \item With multiple queries, the size of the sets $\mathbf{A}_{\affectedscript}$ and $\mathbf{A}_{\notaffectedscript}$ grows, reducing the variance.
    \item Amortized equality does not force per-query equality. 
\end{enumerate*}
\cref{fig:exampleamortized} shows an example of the difference between per-query and amortized correction.
In this example, there are three queries, each with two results from the squares group and two from the circles group.
In all of the queries, the circles group is the affected group.
In per-query correction for group bias, the bias parameter is over-estimated for query $1$ (the blue bar is shorter than the relevance for the circles group members), but under-estimated for query $2$ (the blue bar is longer).
For query $3$, since the square members have a lower average number of clicks compared to the circle members, the per-query correction wrongly detected the squares group as the affected group and the discrimination between the groups has been boosted after the correction.
On the other hand, by aggregating the clicks on all six square members and six circle members and measuring a single group bias parameter for all the queries, the amortized correction has a noticeably better performance at recovering the true relevance (the green bars are close to the red bars).\looseness=-1

\begin{figure}[t]
    \centering
    {
    \begin{tabular}[]{c@{\hspace{1.5em}}c@{\hspace{1.5em}}c}
        Query 1 & Query 2 & Query 3 \\
        \includegraphics[scale=0.9]{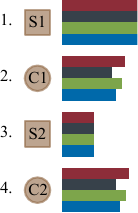}
        & 
        \includegraphics[scale=0.9]{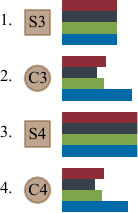}
        & 
        \includegraphics[scale=0.9]{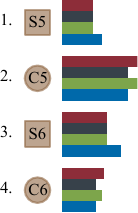}
        \\
        \multicolumn{3}{c}{
        \includegraphics[scale=0.42]{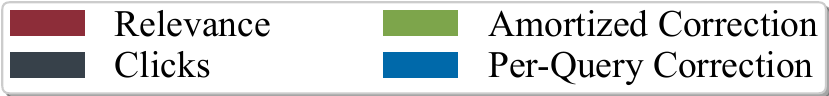}
        }
        
    \end{tabular}
    }
    \caption{Contrasting per-query and amortized correction.
    \vspace{-0.5em}
    }
    \label{fig:exampleamortized}
\end{figure}


The amortized correction, however, introduces a new challenge:
How to detect queries with the same group propensity, before measuring their group propensity?
One way to break this cyclic dependency is by using extra knowledge.
Notice that in order to detect queries with almost the same group propensity, it is only required to have a clustering of queries.
Previous work shows that such a clustering exists for a number of group attributes such as gender~\citep{krieg2022grep}.
In this paper, we first assume that such a clustering of queries is given.
Then, in an ablation experiment, we further show that even loosely clustering the queries, when an accurate and more specific clustering is not available, improves the ranking quality and fairness metrics over the naive case of not correcting for group bias.\looseness=-1

\myparagraph{Upshot.}
We relied on prior studies for the existence of group bias in user interactions and provided theoretical results about its impact on the ranking and merit-based fairness metrics.
Then, we proposed an amortized correction method for group bias that addresses the 
equality-instead-of-equity issue of the per-query correction.
Next, we test our theoretical findings and arguments experimentally.\looseness=-1

\section{Experimental Results}
\label{sec:experiment} 
In our experiments we investigate the following questions regarding group bias:
\begin{enumerate*}[label=(\roman*)]
    \item Is the impact of group bias on degrading the ranking quality and fairness metrics consistent for different sensitive attributes and in different datasets?
    \item Can our correction method effectively correct for group bias?
    \item How does the amortized approach compare to the per-query approach for correction?
    \item How robust is our correction method to the accuracy of clustering the queries based on their group propensity?
\end{enumerate*}

\subsection{Setup}
\label{sec:setup}

\myparagraph{Dataset.}
We use four datasets with provided sensitive attributes and two with synthesized sensitive attributes.
\begin{enumerate*}[label=(\roman*)]
    \item {\bfseries IIT-JEE}: The dataset comprises the scores of candidates who took the Indian Institutes of Technology Joint Entrance Exam (IIT-JEE) in 2009. This information was made public in June 2009, following a Right to Information request~\citep{kumar2009complaint}. It contains the scores of about $385k$ students, the student’s gender ($98k$ women and $287k$ men), their birth category (see~\citep{baswana2019centralized}), and zip code.
    This dataset was used in prior work on implicit bias~\citep[e.g.,][]{Celis2020Interventions}.
    We normalize the scores to the $[0,1]$ interval using min-max normalization.
    Furthermore, we simulate queries by grouping the students based on their birth category and zip code.
    This gives $48.6k$ queries, among which we only keep the ones with both genders and at least one normalized score above $0.5$ and one below $0.5$.
    The filtering gives us $2.9k$ queries with a total of $205k$ scores.
    \item[(ii--iii)] {\bfseries TREC 2019 and 2020}: The academic search dataset provided by the TREC Fair Ranking track 2019 and 2020~\citep{trec-fair-ranking-2019}. These datasets come with $632$ and $200$ training queries, respectively, with an average of $6.7$ and $23.5$ documents per query.
    Following~\citep{sarvi2021understanding, vardasbi2022probabilistic}, we divide the items (i.e., papers) into two groups based on their authors' h-index.
    \item[(iv)] {\bfseries MovieLens $1M$}: The classic movie recommendation dataset comprising $1M$ movie ratings that were provided by $6k$ users for $3.9k$ different movies. We scraped IMDB to obtain the country of origin and box office cumulative worldwide gross values for each item (i.e., movie). For the sensitive attributes, we consider two groupings as follows.
    In $\text{MovieLens}_{[Co.]}$, we divide the movies based on their first listed country of origin into United States (US) and non-US groups with $2.7k$ and $1.2k$ movies and $807k$ and $193k$ ratings, respectively.
    In $\text{MovieLens}_{[BO]}$, we divide the movies based on their box office with a threshold of $100M$\$ into high and low-grossing groups with $388$ and $3.5k$ movies and $324k$ and $676k$ ratings, respectively.
\end{enumerate*}

\myparagraph{LTR dataset.}
To analyze the impact of group bias in the general LTR regime, following prior work on unbiased \ac{LTR}~\citep{joachims2017unbiased, vardasbi2020when, vardasbi2021mixture},
we  the Yahoo! Webscope~\citep{Chapelle2011} and MSLR-WEB30k~\citep{qin2013introducing} datasets that are represented by query-document feature vectors of lengths 501 and 131, respectively, and both have graded relevance labels from $0$ to $4$.
For our experiments on the tabular regime, we use the training set of the Yahoo! and MSLR datasets, with $20k$ and $19k$ queries together with $473k$ and $2.2M$ documents, respectively.
The test set of Yahoo! and MSLR dataset contains $6.7k$ and $6k$ queries together with $163k$ and $749k$ documents, respectively.
Test queries are used for our experiments on the general LTR regime.

\myparagraph{Sensitive attribute for LTR datasets.}
\begin{figure}[t]
    \centering
    {
    \def\arraystretch{.5}
    \begin{tabular}[]{@{}l@{\hspace{0.2em}}c@{}}
        &
        \includegraphics[scale=0.4]{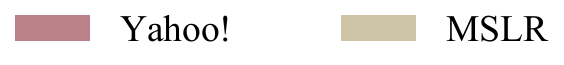}  
        \\
        \rotatebox[origin=lt]{90}{\hspace{3em} \small $G_{\affectedscript}/G_{\notaffectedscript}$ }
        &
        \includegraphics[width=0.22\textwidth]{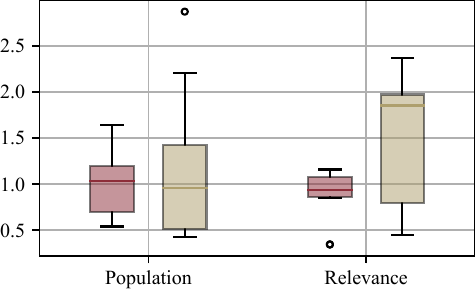}
        
    \end{tabular}
    }
    \caption{Ratio of affected to non-affected group members in terms of population and average utility score (relevance) for different sensitive attributes in Yahoo! and MSLR datasets.
    \vspace{-0.3em}
    }
    \label{fig:stats}
\end{figure}
We extend prior work~\citep{diaz2020evaluating, yadav2021policy, vardasbi2022probabilistic} and use a data-driven approach for selecting features as sensitive attributes and dividing items into two groups based on some threshold on that feature.
We notice that synthetic sensitive attributes are not a substitute for the real sensitive attributes in real-world datasets. As mentioned above, we perform experiments on four real-world datasets with provided \emph{meaningful} sensitive attributes. Our objective for using LTR datasets with synthetic sensitive attributes is to extend and complement the results on real-world sensitive attributes with $32$ new setups with various group utility and population dynamics as will be discussed shortly.
Our criterion for selecting a feature as a sensitive attribute is as follows:
For each feature we divide the items in two groups based on a threshold equal to the mean minus one standard deviation of that feature.
If more than $95\%$ of queries have at least one item from both groups, we select the feature as a candidate for sensitive attribute.
Based on this criterion, we have selected features $\{  5,  88, 100, 141, 155, 264, 393, 426\}$ and $\{ 11,  14,  15, 126, 127, 130, 131$, $132\}$ from the Yahoo! and MSLR datasets, respectively.
Fig.~\ref{fig:stats} gives an overview of the ratio of affected to non-affected group members in terms of population and average utility score.
In what follows we use, e.g., $\text{Yahoo!}_{[426]}$ for the Yahoo! dataset with feature number $426$ as the sensitive feature.
For each feature, we assume two groupings based on two thresholds:
\begin{enumerate*}[label=(\roman*)]
    \item mean value; and
    \item mean minus one standard deviation.
\end{enumerate*}
This yields a total of $32$ setups.

\myparagraph{Bias simulation.}
To simulate group bias, we use \cref{eq:implicitbias}, but to make the simulation more realistic, we add a normal noise to the $\beta_{\affectedscript}$ value for each query.
We experiment with two propensities $\beta_{\affectedscript}\in\{0.6,0.8\}$ and use $\sigma_{\beta}=0.1$ for the standard deviation of the normal noise.
In Sec.~\ref{sec:experiments:inaccurate}, to add to the uncertainty of the setup, we also experiment with higher noise variances of $\sigma_{\beta}\in\{0.2, 0.3\}$.
For our correction method, we found that adding a small amount of noise to the scores for breaking the ties, without swapping the order of the grades, helps to have a smoother curve for $\beta$ in \cref{eq:bestbeta}.

\myparagraph{LTR model.}
For the general LTR model (for tail queries) we use a neural network with attention and LambdaRank Loss as in~\citep{Pobrotyn2020ContextAwareLT}.

\vspace{-1em}
\subsection{Impact of Group Bias}
\label{sec:experiments:impact}
First, we show that group bias, on both tabular and LTR regimes, consistently has a negative impact on the ranking quality and fairness metrics.
To do so, we run experiments on two datasets, namely Yahoo! and MSLR, each with $8$ different features as the sensitive attribute, and two  thresholds for separating the groups (see the ``Sensitive attribute'' paragraph in \cref{sec:setup}).
This gives us a total of $32$ different setups.
For each setup, we simulate the attractiveness probabilities with $\beta_{\affectedscript}\in\{0.8, 0.6\}$ as \emph{mild} and \emph{severe} cases of group bias, and compare the NDCG@10, \ac{DTR}, and \ac{EEL} metrics against the full information case.
Our experiments with other values of bias parameters led to consistent results.

\begin{figure}[t]
    \centering
    {
    \def\arraystretch{.5}
    \begin{tabular}[]{@{}l@{\hspace{0.2em}}c@{\hspace{0.1em}}c@{}}
        &
        \multicolumn{2}{c}{
        \hspace{0.1em}
        \includegraphics[scale=0.38]{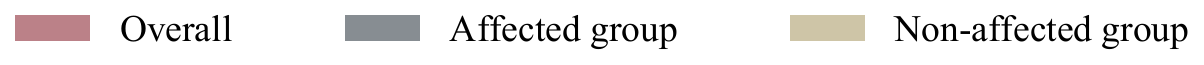}  
        } \\
        & Tabular (training labels)
        & LTR output \\
        \rotatebox[origin=lt]{90}{\hspace{0.7em} \small$\Delta$ NDCG@10}
        &
        \includegraphics[width=0.2\textwidth]{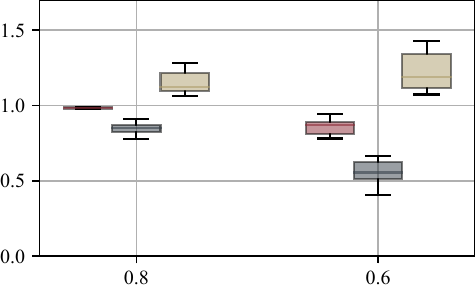}
        &
        \includegraphics[width=0.2\textwidth]{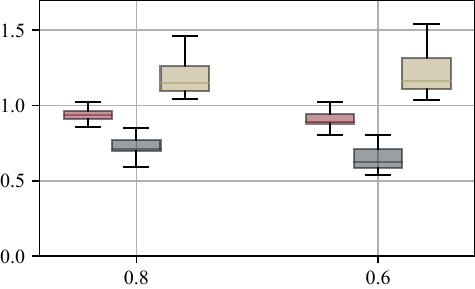} \\
        & \small Group propensity
        & \small Group propensity
        
    \end{tabular}
    }
    \caption{The impact of group bias on ranking quality for the Yahoo! and MSLR datasets with different sensitive attributes.
    \vspace{-2em}
    }
    \label{fig:datasetsndcg}
\end{figure}

\begin{figure}[t]
    \centering
    {
    \def\arraystretch{.5}
    \begin{tabular}[]{@{}l@{\hspace{0.2em}}c@{\hspace{1em}}l@{\hspace{0.2em}}c@{}}
    & \multicolumn{3}{c}{\includegraphics[scale=0.38]{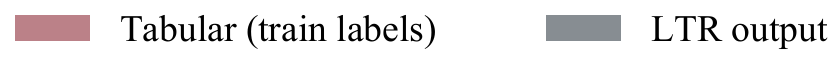}} \\
        & DTR $\uparrow$ (ideal: $1$)
        & & EEL $\downarrow$ (ideal: $0$) \\
        \rotatebox[origin=lt]{90}{\hspace{2.2em} \small $\rho(\text{DTR})$}
        &
        \includegraphics[width=0.21\textwidth]{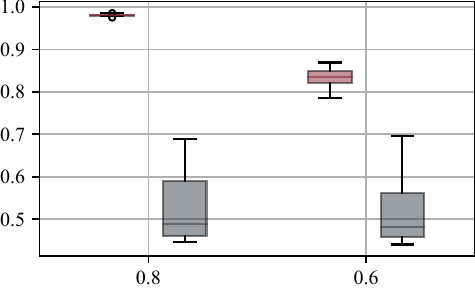}
        & \rotatebox[origin=lt]{90}{\hspace{2.2em} \small $\Delta(\text{EEL})$}
        &
        \includegraphics[width=0.21\textwidth]{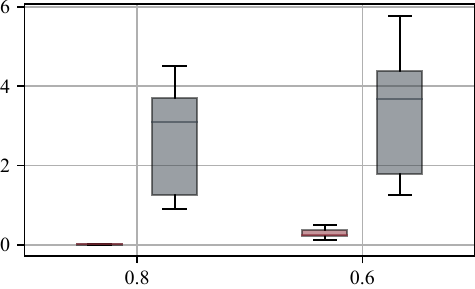} 
        \\
        & \small Group propensity
        & & \small Group propensity
        
    \end{tabular}
    }
    \caption{The impact of group bias on fairness metrics for the Yahoo! and MSLR datasets with different sensitive attributes.
    \vspace{-2em}
    }
    \label{fig:datasetsfairness}
\end{figure}

\cref{fig:datasetsndcg} shows a summary of the impact of group bias on the ranking performance of both tabular and LTR regimes.
These results confirm that group bias degrades the ranking quality of the affected group in the tabular regime and this damage is also reflected in the LTR output, trained over the biased training labels.
As a result of pushing down the relevant members of the affected group, the non-affected group gains ranking quality, i.e., the NDCG of the non-affected group with group bias is higher than the full information case.
However, the overall ranking is worsened with group bias.
Comparing the tabular (left) and LTR output (right) plots in \cref{fig:datasetsndcg}, we observe that increasing the severity of group bias from $0.8$ to $0.6$, affects the tabular regime more.
This may be because the impact of group bias on the LTR outputs is indirect.\looseness=-1

        

\begin{table*}[t]
    \centering
    \caption{The impact of our amortized group bias correction on ranking and fairness metrics in the tabular regime. $\hat{\beta}_{\affectedscript}$ shows the estimated value of the bias parameter as in Eq.~\eqref{eq:bestbeta}. For each metric, the columns ``\textit{B}'' and ``C'' show the ``\textit{Biased}'' and ``Corrected'' performances, respectively. Superscripts $^*$ indicate a significant improvement over the biased case with $p<0.001$.}

    \begin{tabular}{l@{\hspace{1em}}D BA BA BA 
    D@{\hspace{1em}}BA BA BA}
    \toprule
        & \multicolumn{7}{c}{$\beta=0.8$}
        & \multicolumn{7}{c}{$\beta=0.6$}
        \\
        \cmidrule(lr){2-8}\cmidrule(lr){9-15}
        & \cellcolor{white}\multirow{2}{*}{$\hat{\beta}_{\affectedscript}$}
        & \multicolumn{2}{c}{\small $\text{NDCG}@10\uparrow$}
        & \multicolumn{2}{c}{\small $\rho(\text{DTR})\uparrow$}
        & \multicolumn{2}{c}{\small $\Delta(\text{EEL})\downarrow$} 
        
        & \cellcolor{white}\multirow{2}{*}{$\hat{\beta}_{\affectedscript}$}
        & \multicolumn{2}{c}{\small $\text{NDCG}@10\uparrow$} 
        & \multicolumn{2}{c}{\small $\rho(\text{DTR})\uparrow$}
        & \multicolumn{2}{c}{\small $\Delta(\text{EEL})\downarrow$} 
        \\
        \cmidrule(lr){3-4}\cmidrule(lr){5-6}\cmidrule(lr){7-8}
        \cmidrule(lr){10-11}\cmidrule(lr){12-13}\cmidrule(lr){14-15}
        & \cellcolor{white}
        & B & C & B & C & B & C
        & \cellcolor{white}
        & B & C & B & C & B & C
        \\
\midrule
$\text{Yahoo!}_{[426]}$    &               0.825 &  0.987 &  0.996$^*$ &    0.820 &  0.955$^*$ &    0.447 &  0.120$^*$ &               0.626 &  0.885 &  0.988$^*$ &    0.641 &  0.941$^*$ &    1.809 &  0.172$^*$ \\
$\text{MSLR}_{[127]}$      &               0.843 &  0.975 &  0.991$^*$ &    0.813 &  0.948$^*$ &    1.687 &  0.308$^*$ &               0.648 &  0.780 &  0.966$^*$ &    0.627 &  0.926$^*$ &    7.146 &  0.471$^*$ \\
IIT-JEE                    &               0.727 &  0.989 &      0.991\phantom{$^*$} &    0.799 &  0.906$^*$ &    0.504 &  0.341$^*$ &               0.547 &  0.970 &  0.985$^*$ &    0.600 &  0.901$^*$ &    1.401 &  0.410$^*$ \\
$\text{MovieLens}_{[Co.]}$ &               0.822 &  1.000 &      1.000\phantom{$^*$} &    0.800 &  0.962$^*$ &    1.101 &  0.513$^*$ &               0.612 &  0.998 &  1.000$^*$ &    0.602 &  0.958$^*$ &    7.113 &  1.908$^*$ \\
$\text{MovieLens}_{[BO]}$  &               0.781 &  1.000 &      1.000\phantom{$^*$} &    0.799 &  0.974$^*$ &    2.330 &  0.895$^*$ &               0.579 &  0.994 &  1.000$^*$ &    0.600 &  0.961$^*$ &   24.800 &  2.831$^*$ \\
TREC 2019                  &               0.838 &  0.997 &      1.000\phantom{$^*$} &    0.888 &  0.954$^*$ &    0.041 &  0.020$^*$ &               0.634 &  0.986 &  0.999$^*$ &    0.771 &  0.937$^*$ &    0.129 &  0.028$^*$ \\
TREC 2020                  &               0.821 &  0.995 &      0.999\phantom{$^*$} &    0.815 &  0.954$^*$ &    0.356 &  0.114$^*$ &               0.614 &  0.945 &  0.995$^*$ &    0.627 &  0.941$^*$ &    1.152 &  0.137$^*$ \\
\bottomrule
 
    \end{tabular}
    \label{tab:tabularall}
\end{table*}
        

\begin{table*}[t]
    \centering
    \caption{The impact of our amortized group bias correction on ranking and fairness metrics in general LTR regime. For each metric, the columns ``\textit{B}'' and ``C'' show the ``\textit{Biased}'' and ``Corrected'' performances, respectively. 
    }

    \begin{tabular}{l@{\hspace{1em}}ccc@{\hspace{1em}}BA BA BA 
    @{\hspace{1em}}BA BA BA}
    \toprule
        & \multicolumn{3}{c}{Full info.}
        & \multicolumn{6}{c}{$\beta=0.8$}
        & \multicolumn{6}{c}{$\beta=0.6$}
        \\
        \cmidrule(lr){2-4}\cmidrule(lr){5-10}\cmidrule(lr){11-16}
        & \cellcolor{white}\multirow{2}{*}{\rotatebox[origin=lt]{55}{\small $\text{NDCG}$}}
        & \cellcolor{white}\multirow{2}{*}{\rotatebox[origin=lt]{55}{\small$\rho(\text{DTR})$}}
        & \cellcolor{white}\multirow{2}{*}{\rotatebox[origin=lt]{55}{\small$\Delta(\text{EEL})$}}

        & \multicolumn{2}{c}{\small $\text{NDCG}@10\uparrow$}
        & \multicolumn{2}{c}{\small $\rho(\text{DTR})\uparrow$}
        & \multicolumn{2}{c}{\small $\Delta(\text{EEL})\downarrow$} 
        
        & \multicolumn{2}{c}{\small $\text{NDCG}@10\uparrow$} 
        & \multicolumn{2}{c}{\small $\rho(\text{DTR})\uparrow$}
        & \multicolumn{2}{c}{\small $\Delta(\text{EEL})\downarrow$} 
        \\
        \cmidrule(lr){5-6}\cmidrule(lr){7-8}\cmidrule(lr){9-10}
        \cmidrule(lr){11-12}\cmidrule(lr){13-14}\cmidrule(lr){15-16}
        & & &
        & B & C & B & C & B & C
        & B & C & B & C & B & C
        \\
\midrule
$\text{Yahoo!}_{[426]}$ &       0.649 &    0.477 &    0.673 &       0.615 &  0.645$^*$ &    0.319 &  0.428$^*$ &    1.970 &  0.741$^*$ &       0.580 &  0.645$^*$ &    0.313 &  0.456$^*$ &    2.711 &  0.728$^*$ \\

$\text{MSLR}_{[127]}$   &       0.320 &    0.681 &    1.779 &       0.283 &  0.313$^*$ &    0.675 &      0.684\phantom{$^*$} &    4.380 &  2.016$^*$ &       0.265 &  0.309$^*$ &    0.671 &      0.683\phantom{$^*$} &    5.803 &  2.147$^*$ \\
\bottomrule
 
    \end{tabular}
    \label{tab:ltrall}
\end{table*}

\cref{fig:datasetsfairness} shows a summary of the change in fairness metrics of both tabular and LTR regimes as a result of group bias.
For example, a value of $\rho(\text{DTR})=0.5$ in the left plot means that on average, the target (i.e., ideal) exposure computed by the biased attractiveness scores differs from the true target exposure computed in the full information case by a factor of $0.5$.
Similarly, a value of $\Delta(\text{EEL})=3$ in the right plot means that on average, the target exposure of the biased case has an $\ell_2$ distance of $3$ to the full information target exposure.
These results confirm that group bias changes the target exposure in the tabular regime and this change is reflected in the LTR output, trained over the biased training labels.
Consequently, when the system distributes the exposure according to the target exposure to make a ranking fair, if the scores are suffering from group bias, the result is not truly fair.

\vspace{-1em}
\subsection{Amortized Correction}
In the next set of experiments, we show the effectiveness of our proposed correction method in \cref{sec:correction} for compensating for the negative effect of group bias.
\cref{tab:tabularall} compares the ranking quality, in terms of NDCG@10, as well as two merit-based fairness metrics, DTR and EEL, between the biased and corrected cases in the tabular regime.
Note here that in the tabular regime, it is assumed that accurate relevance estimations, up to the group bias, are available, meaning that the unbiased metrics get their ideal values, i.e., unbiased NDCG@10 equals $1$. When the corrected NDCG@10 reaches $1$ (e.g., MovieLens with both bias parameter values), it means a full recovery from group bias.
In all setups, our correction method improves the ranking quality and fairness metrics over the biased case.
With some exceptions for the ranking quality with mild group bias ($\beta=0.8$), the improvements are significant.
For each bias parameter value, we have also included the estimated value obtained from~\cref{eq:bestbeta}.
We observe that our estimated bias values, i.e., $\hat{\beta}_{\affectedscript}$, are close to their corresponding true values $\beta$.

We further analyze the effectiveness of our correction method in the general LTR regime.
\cref{tab:ltrall} contains the comparison of ranking quality and fairness metrics between the biased and corrected cases in our tested LTR datasets.
We also report the full information case in the table.
Here, we only report the results for one sensitive attribute for each dataset, noting that the results for other sensitive attributes lead to similar observations.
Similar to the tabular regime, here we also observe performance improvements as a result of our correction method, compared to the biased case.
Except for the DTR metric in MSLR, all the improvements are significant with $p<0.001$.
Compared with the full information case, we observe that in the Yahoo! dataset, our correction method leads to full recovery of NDCG@10, while in the MSLR dataset, there remains a slight gap toward the full information quality.
One reason for this difference could be the distribution of relevant items in the affected and non-affected groups:
In $\text{Yahoo!}_{[426]}$ the ratio between the mean relevance of items in $G_{\affectedscript}$ to $G_{\notaffectedscript}$ is $1.05$, whereas the same quantity in $\text{MSLR}_{[127]}$ is $2.21$.
Therefore, the assumption of similar utility score distributions for both groups is closer to reality in $\text{Yahoo!}_{[426]}$ than in $\text{MSLR}_{[127]}$.
Similarly to NDCG, we observe that DTR and EEL are almost fully recovered from group bias in the Yahoo! dataset, but not in the MSLR dataset.

Finally, Fig.~\ref{fig:datasetscorrectedndcg} shows a summary of the ranking quality of biased (left) and corrected (right) utility scores in the tabular regime on all $32$ setups of the LTR datasets mentioned in Sec.~\ref{sec:experiments:impact}.
In all but two cases, we observe that our correction method effectively improves the ranking quality over the biased case and achieves NDCG@10 close to $1$.
The two outlier cases correspond to $\text{Yahoo!}_{[5]}$ ({for each feature, two different thresholds for separating the groups are used}), where the ratio between the average utility of the affected group and the non-affected group is as low as $0.3$.
This is the same outlier as in Fig.~\ref{fig:stats}.
It is worth mentioning that in a slightly less severe violation of the same distribution assumption, i.e., $\text{MSLR}_{[130]}$ with a utility ratio of $0.45$, our correction method is able to improve the ranking quality over the biased case.
One interesting future direction would be to find out if this phenomenon, i.e., having the true average utility of the underrepresented group considerably lower than the other group, happens in real-world settings and how to correct for the bias in such cases.

\begin{remark}
    This paper introduces a novel bias paradigm. Our study emphasizes that even the supposedly ideal fair rankings are not actually fair when group bias is unaddressed (RQ1); and that employing amortized correction yields more robust results (RQ2). Baseline comparisons are only pertinent to RQ2. We compare our amortized correction method to the per-query method which is the \emph{only} existing method for group bias measurement and correction. It is noteworthy that our method often achieves full recovery, signifying the highest conceivable effectiveness in corresponding cases.\looseness=-1
\end{remark}

\begin{figure}[t]
    \centering
    {
    \def\arraystretch{.5}
    \begin{tabular}[]{@{}l@{\hspace{0.2em}}c@{\hspace{0.1em}}c@{}}
        &
        \multicolumn{2}{c}{
        \hspace{0.1em}
        \includegraphics[scale=0.38]{figures/datasets_ndcg_legend}  
        } \\
        & Biased
        & Corrected \\
        \rotatebox[origin=lt]{90}{\hspace{1.0em} \small NDCG@10}
        &
        \includegraphics[width=0.2\textwidth]{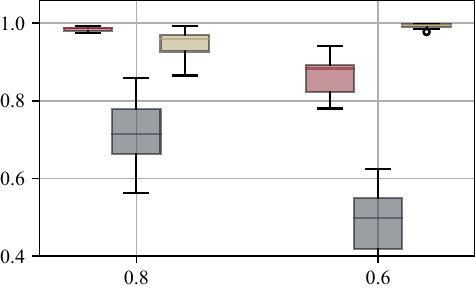}
        &
        \includegraphics[width=0.2\textwidth]{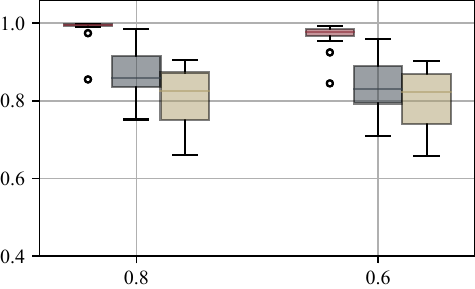} \\
        & \small Group propensity
        & \small Group propensity
        
    \end{tabular}
    }
    \caption{The ranking performance of biased (left) and corrected (right) scores for the Yahoo! and MSLR datasets with different sensitive attributes.\vspace{-0.4em}}
    \label{fig:datasetscorrectedndcg}
\end{figure}

\subsection{Ablation Study}
\subsubsection{Impact of cluster size on correction}
{
\begin{figure}[t]
    \centering
    {
    \def\arraystretch{.5}
    \begin{tabular}[]{@{}l@{\hspace{0.2em}}c@{\hspace{0.7em}}c@{}}
        & \multicolumn{2}{c}{\includegraphics[scale=0.45]{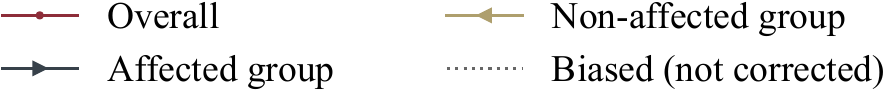}} 
        \\
        \rotatebox[origin=lt]{90}{\hspace{1.6em} \small NDCG@10}
        &
        \includegraphics[width=0.2\textwidth]{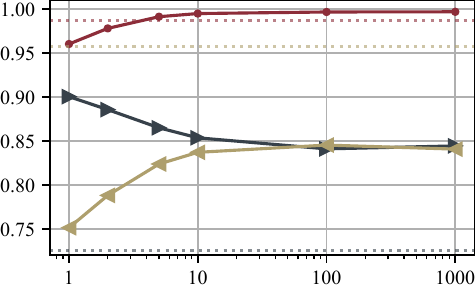}
        &
        \includegraphics[width=0.2\textwidth]{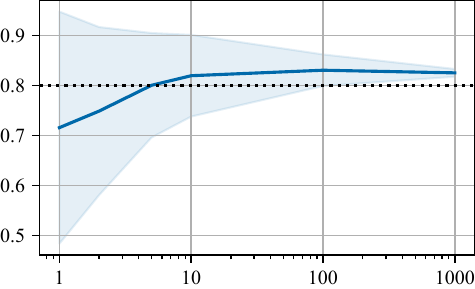}
         \\
        & \small Cluster size
        & \small Cluster size
        \\
        & a) Ranking quality
        & b) Inferred $\beta$ values
    \end{tabular}
    }
    \caption{The impact of cluster size of group propensity on the amortized correction for group bias ($\beta_{\affectedscript}=0.8$) for the $\text{Yahoo!}_{[426]}$ dataset. Ranking quality of corrected scores (a), and accuracy of the inferred group propensity (b).
    \vspace{-.1em}
    }
    \label{fig:clustersize}
\end{figure}
}

In \cref{sec:correction:amortized} we argued against measuring group propensity for each query.
Here, we analyze the impact of cluster size on the correction method.
\cref{fig:clustersize} shows the ranking quality of the corrected scores as a function of the cluster size.
The overall ranking quality (red line) improves as the cluster size grows and it converges to its final value at around a cluster size of $10$.
For severe group bias ($\beta_{\affectedscript}=0.6$), which we omit due to space restrictions, the same pattern is observed, but with a convergence point of $100$.
In both cases, using a cluster size below the convergence point leads to corrected rankings that are even worse than the biased ranking.
Comparing the ranking quality of the affected group (black line) with the non-affected group (golden line), we observe that smaller clusters result in over-compensation of group bias.
The reason is revealed in \cref{fig:clustersize}(b): for smaller cluster sizes, the inferred propensity is under-estimated, leading to larger corrected scores for the affected group members.
One other interesting observation in \cref{fig:clustersize}(b) is the high variance of the inferred propensity for small clusters (issue (i) in \cref{sec:correction:amortized}).\looseness=-1

\subsubsection{Impact of clustering accuracy}
\label{sec:experiments:inaccurate}
{
\begin{figure}[t]
    \centering
    {
    \def\arraystretch{.5}
    \begin{tabular}[]{@{}l@{\hspace{0.2em}}c@{\hspace{0.3em}}c@{}}
        & \multicolumn{2}{c}{
        \includegraphics[scale=0.45]{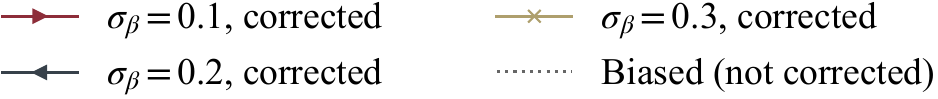}  }
        \\
        & $\mu_{\beta}=0.8$
        & $\mu_{\beta}\in\{0.6,0.8\}$ 
        \\
        \rotatebox[origin=lt]{90}{\hspace{2em} \small NDCG@10}
        &
        \includegraphics[width=0.2\textwidth]{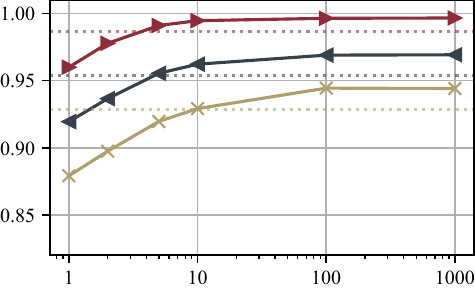}
        &
        \includegraphics[width=0.2\textwidth]{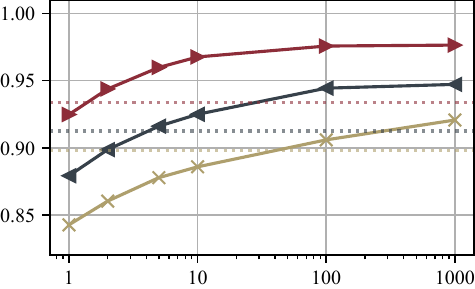}
        \\
        & \small Cluster size
        & \small Cluster size
        
    \end{tabular}
    }
    \caption{Effectiveness of amortized correction method when accurate clustering is not available. $\text{Yahoo!}_{[426]}$ dataset.
\vspace{-0.1em}
    }
    \label{fig:inaccurate}
\end{figure}
}
Finally, we address the challenge of inaccurate clustering of queries based on their group propensity that we raised at the end of \cref{sec:correction:amortized}.
The main goal of the following sets of experiments is to show that our correction method, even when accurate clustering of queries is not available, is still effective in improving the ranking quality over the biased case.
To confirm this, we add to the uncertainty of our simulation setup in two different ways:
\begin{enumerate*}[label=(\roman*)]
    \item \emph{Higher variance}: We increase the variance of the group propensity when simulating attractiveness. We consider $\sigma_{\beta}\in\{0.2, 0.3\}$. 
    \item \emph{Two modes}: Instead of using a unimodal normal distribution to simulate group propensity, we use a mixture model with two modes $\{0.6, 0.8\}$. This means that for half of the queries, the group propensity follows a normal distribution with a mean of $0.6$ while for the other half, the mean is $0.8$ and during inference, we are not given the information about which query belongs to which mode.
\end{enumerate*}
\cref{fig:inaccurate} shows the ranking quality of the corrected scores w.r.t. different cluster sizes.
In both plots, we observe that increasing the variance of the simulated group propensity both increases the negative impact of group bias on ranking (dotted lines) and makes it harder to correct for the bias (solid lines).
The important result of these experiments, however, is that even though the uncertainty about group propensity is high, our amortized correction method almost always improves the ranking quality over the biased case.
Note that in all setups, per-query correction as well as clusters with a small size lead to worse ranking qualities than the biased scores.
Interestingly, when there are two modes of group propensity (right plot), our correction method, oblivious to the mode membership and assuming a fixed propensity, is able to correct the scores and achieve a ranking performance higher than the biased case.

\section{Conclusion and Future Work}
\label{sec:conclusion} 
We have addressed group membership bias, which is based on the observation that a user's perception of an item's group membership may affect their judgment about the utility of an item.
We have provided extensive theoretical and empirical analyses of the impact of group bias on the ranking quality and two fairness of exposure metrics, DTR and EEL.
By utilizing an auxiliary variable $\nu$ as the fraction of affected relevant items that are still as attractive as the non-affected relevant items,
we have shown that, in the presence of group bias, NDCG and DTR change linearly with $\nu$, while the change in EEL has a more complex form in terms of $\nu$.

Correcting for group bias, which is a type of \emph{content-based} bias, is not as easy as \emph{context-based} types of bias such as position and trust bias.
To measure group bias, assumptions based on fairness constraints should be made about the utility distribution of different groups.
However, such assumptions can potentially make the equity-based notion of fairness meaningless.
Amortized correction for the group bias is our solution to this issue, as global equality does not contradict local equity.
We have experimentally confirmed that our correction method, when its assumptions are met, is able to fully recover the scores suffering from group bias.


There are several future directions to this study.
One way to extend our results is to 
propose measurement and correction methods that perform better with increased uncertainty of group propensity.
Moreover, as group bias is based on users' perception of group membership, it can change over time.
Analyzing group bias in a dynamic setting is therefore another direction.
This study deals with the impact of group bias on fair exposure and, hence, we only consider the so-called treatment-based fairness metrics.
In contrast, some studies focus on impact-based fairness~\citep{saito2022fair}, where the objective is to make sure that items receive a fair amount of impact, e.g., clicks.
While our work suggests a way to correct for group bias in the historical clicks in order to make the exposure in future rankings fair, a next direction would be to account for group bias when optimizing for impact-based fairness.

\vspace{-0.3em}
\section*{Code and data}
To ensure the reproducibility of the reported results, this work only made use of publicly available data and our experimental implementation can be accessed publicly at 
\url{https://github.com/AliVard/groupbias}.

\vspace{-0.3em}
\begin{acks}
This research was supported by Elsevier, by the Netherlands Organisation for Scientific Research (NWO)
under pro\-ject nrs
612.\-001.\-551, 024.004.022, NWA.1389.20.\-183, and KICH3.LTP.20.006, and by the European Union's Horizon Europe program under grant agreement No. 101070212. All content represents the opinion of the authors, which is not necessarily shared or endorsed by their respective employers and/or sponsors.\looseness=-1
\end{acks}
 
\clearpage
\if0
\section*{Ethical Considerations}

The research reported in this paper could affect applications in a number of ways.
First, correcting for group bias can improve the ranking quality of systems that rely on user clicks for their ranking.
Second, it helps reduce discrimination in search results by promoting the estimated utility scores of underrepresented groups in a principled way.
In the long run, the ranking improvement as well as reduced disparity can lead to enhanced overall user experiences.

The positive societal implication of these applications could be reducing discrimination in various domains, such as hiring, lending, and criminal justice.
They can also lead to more balanced information dissemination and potentially shape users' cognitive concepts as pointed out in \citep{vlasceanu2022propagation}.
While this research offers numerous societal benefits in terms of fairness, there might also be challenges associated with these efforts. 
Modifying algorithms to correct for group bias and promote fairness could cause unintended negative consequences. 
For example, the assumption of similar utility distributions for different groups may not hold in some cases, and the correction method may over-compensate the feedback signals over the underrepresented group, or inadvertently amplify bias.
In some settings, accessing sensitive attributes of the items may raise concerns about privacy and data security.

Future research that builds on our work to improve societal outcomes could focus on impact-based fairness instead of fairness of exposure, e.g., accounting for group bias when trying to generate a ranking that the members of different groups receive a fair amount of clicks (and not just exposure) based on their merits.
\fi

\bibliographystyle{bibliography/ACM-Reference-Format}
\balance
\bibliography{bibliography/references}


\begin{thebibliography}{58}


\ifx \showCODEN    \undefined \def \showCODEN     #1{\unskip}     \fi
\ifx \showDOI      \undefined \def \showDOI       #1{#1}\fi
\ifx \showISBNx    \undefined \def \showISBNx     #1{\unskip}     \fi
\ifx \showISBNxiii \undefined \def \showISBNxiii  #1{\unskip}     \fi
\ifx \showISSN     \undefined \def \showISSN      #1{\unskip}     \fi
\ifx \showLCCN     \undefined \def \showLCCN      #1{\unskip}     \fi
\ifx \shownote     \undefined \def \shownote      #1{#1}          \fi
\ifx \showarticletitle \undefined \def \showarticletitle #1{#1}   \fi
\ifx \showURL      \undefined \def \showURL       {\relax}        \fi
\providecommand\bibfield[2]{#2}
\providecommand\bibinfo[2]{#2}
\providecommand\natexlab[1]{#1}
\providecommand\showeprint[2][]{arXiv:#2}

\bibitem[\protect\citeauthoryear{Ai, Bi, Luo, Guo, and Croft}{Ai
  et~al\mbox{.}}{2018}]%
        {ai2018unbiased}
\bibfield{author}{\bibinfo{person}{Qingyao Ai}, \bibinfo{person}{Keping Bi},
  \bibinfo{person}{Cheng Luo}, \bibinfo{person}{Jiafeng Guo}, {and}
  \bibinfo{person}{W~Bruce Croft}.} \bibinfo{year}{2018}\natexlab{}.
\newblock \showarticletitle{Unbiased Learning to Rank with Unbiased Propensity
  Estimation}. In \bibinfo{booktitle}{\emph{The 41st International ACM SIGIR
  Conference on Research \& Development in Information Retrieval}}. ACM,
  \bibinfo{pages}{385--394}.
\newblock


\bibitem[\protect\citeauthoryear{Balagopalan, Jacobs, and Biega}{Balagopalan
  et~al\mbox{.}}{2023}]%
        {Balagopalan2023therole}
\bibfield{author}{\bibinfo{person}{Aparna Balagopalan},
  \bibinfo{person}{Abigail~Z. Jacobs}, {and} \bibinfo{person}{Asia~J. Biega}.}
  \bibinfo{year}{2023}\natexlab{}.
\newblock \showarticletitle{The Role of Relevance in Fair Ranking}. In
  \bibinfo{booktitle}{\emph{Proceedings of the 46th International ACM SIGIR
  Conference on Research and Development in Information Retrieval}}
  \emph{(\bibinfo{series}{SIGIR '23})}. \bibinfo{publisher}{Association for
  Computing Machinery}, \bibinfo{address}{New York, NY, USA},
  \bibinfo{pages}{2650–2660}.
\newblock
\showISBNx{9781450394086}
\urldef\tempurl%
\url{https://doi.org/10.1145/3539618.3591933}
\showDOI{\tempurl}


\bibitem[\protect\citeauthoryear{Baswana, Chakrabarti, Chandran, Kanoria, and
  Patange}{Baswana et~al\mbox{.}}{2019}]%
        {baswana2019centralized}
\bibfield{author}{\bibinfo{person}{Surender Baswana},
  \bibinfo{person}{Partha~Pratim Chakrabarti}, \bibinfo{person}{Sharat
  Chandran}, \bibinfo{person}{Yashodhan Kanoria}, {and}
  \bibinfo{person}{Utkarsh Patange}.} \bibinfo{year}{2019}\natexlab{}.
\newblock \showarticletitle{Centralized Admissions for Engineering Colleges in
  India}. In \bibinfo{booktitle}{\emph{Proceedings of the 2019 ACM Conference
  on Economics and Computation}}. \bibinfo{pages}{323--324}.
\newblock


\bibitem[\protect\citeauthoryear{Biega, Diaz, Ekstrand, Feldman, and
  Kohlmeier}{Biega et~al\mbox{.}}{2021}]%
        {biega-2020-overview}
\bibfield{author}{\bibinfo{person}{Asia~J. Biega}, \bibinfo{person}{Fernando
  Diaz}, \bibinfo{person}{Michael~D. Ekstrand}, \bibinfo{person}{Sergey
  Feldman}, {and} \bibinfo{person}{Sebastian Kohlmeier}.}
  \bibinfo{year}{2021}\natexlab{}.
\newblock \showarticletitle{Overview of the {TREC} 2020 Fair Ranking Track}. In
  \bibinfo{booktitle}{\emph{The Twenty-Ninth Text REtrieval Conference
  Proceedings (TREC 2020)}}.
\newblock


\bibitem[\protect\citeauthoryear{Biega, Diaz, Ekstrand, and Kohlmeier}{Biega
  et~al\mbox{.}}{2019}]%
        {trec-fair-ranking-2019}
\bibfield{author}{\bibinfo{person}{Asia~J. Biega}, \bibinfo{person}{Fernando
  Diaz}, \bibinfo{person}{Michael~D. Ekstrand}, {and}
  \bibinfo{person}{Sebastian Kohlmeier}.} \bibinfo{year}{2019}\natexlab{}.
\newblock \showarticletitle{Overview of the TREC 2019 Fair Ranking Track}. In
  \bibinfo{booktitle}{\emph{The Twenty-Eighth Text REtrieval Conference (TREC
  2019) Proceedings}}.
\newblock


\bibitem[\protect\citeauthoryear{Biega, Gummadi, and Weikum}{Biega
  et~al\mbox{.}}{2018}]%
        {biega2018equity}
\bibfield{author}{\bibinfo{person}{Asia~J. Biega}, \bibinfo{person}{Krishna~P.
  Gummadi}, {and} \bibinfo{person}{Gerhard Weikum}.}
  \bibinfo{year}{2018}\natexlab{}.
\newblock \showarticletitle{Equity of Attention: Amortizing Individual Fairness
  in Rankings}. In \bibinfo{booktitle}{\emph{The 41st International ACM SIGIR
  Conference on Research \& Development in Information Retrieval}}
  \emph{(\bibinfo{series}{SIGIR '18})}. \bibinfo{publisher}{Association for
  Computing Machinery}, \bibinfo{address}{New York, NY, USA},
  \bibinfo{pages}{405–414}.
\newblock
\showISBNx{9781450356572}
\urldef\tempurl%
\url{https://doi.org/10.1145/3209978.3210063}
\showDOI{\tempurl}


\bibitem[\protect\citeauthoryear{Brownstein}{Brownstein}{2017}]%
        {brownstein2015implicit}
\bibfield{author}{\bibinfo{person}{Michael Brownstein}.}
  \bibinfo{year}{2017}\natexlab{}.
\newblock \showarticletitle{Implicit Bias}.
\newblock In \bibinfo{booktitle}{\emph{Stanford Encyclopedia of Philosophy}},
  \bibfield{editor}{\bibinfo{person}{Edward Zalta}} (Ed.).
\newblock


\bibitem[\protect\citeauthoryear{Celis, Mehrotra, and Vishnoi}{Celis
  et~al\mbox{.}}{2020}]%
        {Celis2020Interventions}
\bibfield{author}{\bibinfo{person}{L.~Elisa Celis}, \bibinfo{person}{Anay
  Mehrotra}, {and} \bibinfo{person}{Nisheeth~K. Vishnoi}.}
  \bibinfo{year}{2020}\natexlab{}.
\newblock \showarticletitle{Interventions for Ranking in the Presence of
  Implicit Bias}. In \bibinfo{booktitle}{\emph{Proceedings of the 2020
  Conference on Fairness, Accountability, and Transparency}}
  \emph{(\bibinfo{series}{FAT* '20})}. \bibinfo{publisher}{Association for
  Computing Machinery}, \bibinfo{address}{New York, NY, USA},
  \bibinfo{pages}{369–380}.
\newblock
\showISBNx{9781450369367}
\urldef\tempurl%
\url{https://doi.org/10.1145/3351095.3372858}
\showDOI{\tempurl}


\bibitem[\protect\citeauthoryear{Chapelle and Chang}{Chapelle and
  Chang}{2011}]%
        {Chapelle2011}
\bibfield{author}{\bibinfo{person}{Olivier Chapelle} {and} \bibinfo{person}{Yi
  Chang}.} \bibinfo{year}{2011}\natexlab{}.
\newblock \showarticletitle{Yahoo! Learning to Rank Challenge Overview}. In
  \bibinfo{booktitle}{\emph{Proceedings of the Learning to Rank Challenge}}
  \emph{(\bibinfo{series}{Proceedings of Machine Learning Research})},
  Vol.~\bibinfo{volume}{14}. \bibinfo{publisher}{PMLR}, \bibinfo{pages}{1--24}.
\newblock
\urldef\tempurl%
\url{https://proceedings.mlr.press/v14/chapelle11a.html}
\showURL{%
\tempurl}


\bibitem[\protect\citeauthoryear{Chen, Dong, Wang, Feng, Wang, and He}{Chen
  et~al\mbox{.}}{2023}]%
        {chen2023bias}
\bibfield{author}{\bibinfo{person}{Jiawei Chen}, \bibinfo{person}{Hande Dong},
  \bibinfo{person}{Xiang Wang}, \bibinfo{person}{Fuli Feng},
  \bibinfo{person}{Meng Wang}, {and} \bibinfo{person}{Xiangnan He}.}
  \bibinfo{year}{2023}\natexlab{}.
\newblock \showarticletitle{Bias and Debias in Recommender System: A Survey and
  Future Directions}.
\newblock \bibinfo{journal}{\emph{ACM Trans. Inf. Syst.}} \bibinfo{volume}{41},
  \bibinfo{number}{3}, Article \bibinfo{articleno}{67} (\bibinfo{date}{feb}
  \bibinfo{year}{2023}), \bibinfo{numpages}{39}~pages.
\newblock
\showISSN{1046-8188}
\urldef\tempurl%
\url{https://doi.org/10.1145/3564284}
\showDOI{\tempurl}


\bibitem[\protect\citeauthoryear{Chuklin, Markov, and Rijke}{Chuklin
  et~al\mbox{.}}{2015}]%
        {chuklin2015click}
\bibfield{author}{\bibinfo{person}{Aleksandr Chuklin}, \bibinfo{person}{Ilya
  Markov}, {and} \bibinfo{person}{Maarten~de Rijke}.}
  \bibinfo{year}{2015}\natexlab{}.
\newblock \bibinfo{booktitle}{\emph{Click Models for Web Search}}.
\newblock \bibinfo{publisher}{Morgan \& Claypool Publishers}.
\newblock


\bibitem[\protect\citeauthoryear{Deldjoo, Jannach, Bellogin, Difonzo, and
  Zanzonelli}{Deldjoo et~al\mbox{.}}{2023}]%
        {deldjoo2023fairness}
\bibfield{author}{\bibinfo{person}{Yashar Deldjoo}, \bibinfo{person}{Dietmar
  Jannach}, \bibinfo{person}{Alejandro Bellogin}, \bibinfo{person}{Alessandro
  Difonzo}, {and} \bibinfo{person}{Dario Zanzonelli}.}
  \bibinfo{year}{2023}\natexlab{}.
\newblock \showarticletitle{Fairness in Recommender Systems: Research Landscape
  and Future Directions}.
\newblock \bibinfo{journal}{\emph{User Modeling and User-Adapted Interaction}}
  (\bibinfo{year}{2023}), \bibinfo{pages}{1--50}.
\newblock


\bibitem[\protect\citeauthoryear{Diaz, Mitra, Ekstrand, Biega, and
  Carterette}{Diaz et~al\mbox{.}}{2020}]%
        {diaz2020evaluating}
\bibfield{author}{\bibinfo{person}{Fernando Diaz}, \bibinfo{person}{Bhaskar
  Mitra}, \bibinfo{person}{Michael~D. Ekstrand}, \bibinfo{person}{Asia~J.
  Biega}, {and} \bibinfo{person}{Ben Carterette}.}
  \bibinfo{year}{2020}\natexlab{}.
\newblock \showarticletitle{Evaluating Stochastic Rankings with Expected
  Exposure}. In \bibinfo{booktitle}{\emph{Proceedings of the 29th ACM
  International Conference on Information \& Knowledge Management}}
  \emph{(\bibinfo{series}{CIKM '20})}. \bibinfo{publisher}{Association for
  Computing Machinery}, \bibinfo{address}{New York, NY, USA},
  \bibinfo{pages}{275–284}.
\newblock
\showISBNx{9781450368599}
\urldef\tempurl%
\url{https://doi.org/10.1145/3340531.3411962}
\showDOI{\tempurl}


\bibitem[\protect\citeauthoryear{Ekstrand, McDonald, Raj, and Johnson}{Ekstrand
  et~al\mbox{.}}{2022}]%
        {trec-fair-ranking-2021}
\bibfield{author}{\bibinfo{person}{Michael~D. Ekstrand},
  \bibinfo{person}{Graham McDonald}, \bibinfo{person}{Amifa Raj}, {and}
  \bibinfo{person}{Isaac Johnson}.} \bibinfo{year}{2022}\natexlab{}.
\newblock \showarticletitle{Overview of the TREC 2021 Fair Ranking Track}. In
  \bibinfo{booktitle}{\emph{The Thirtieth Text REtrieval Conference (TREC 2021)
  Proceedings}}.
\newblock


\bibitem[\protect\citeauthoryear{Emelianov, Gast, Gummadi, and
  Loiseau}{Emelianov et~al\mbox{.}}{2020}]%
        {Emelianov2020On}
\bibfield{author}{\bibinfo{person}{Vitalii Emelianov}, \bibinfo{person}{Nicolas
  Gast}, \bibinfo{person}{Krishna~P. Gummadi}, {and} \bibinfo{person}{Patrick
  Loiseau}.} \bibinfo{year}{2020}\natexlab{}.
\newblock \showarticletitle{On Fair Selection in the Presence of Implicit
  Variance}. In \bibinfo{booktitle}{\emph{Proceedings of the 21st ACM
  Conference on Economics and Computation}} \emph{(\bibinfo{series}{EC '20})}.
  \bibinfo{publisher}{Association for Computing Machinery},
  \bibinfo{address}{New York, NY, USA}, \bibinfo{pages}{649–675}.
\newblock
\showISBNx{9781450379755}
\urldef\tempurl%
\url{https://doi.org/10.1145/3391403.3399482}
\showDOI{\tempurl}


\bibitem[\protect\citeauthoryear{Fabris, Purpura, Silvello, and Susto}{Fabris
  et~al\mbox{.}}{2020}]%
        {fabris2020gender}
\bibfield{author}{\bibinfo{person}{Alessandro Fabris}, \bibinfo{person}{Alberto
  Purpura}, \bibinfo{person}{Gianmaria Silvello}, {and}
  \bibinfo{person}{Gian~Antonio Susto}.} \bibinfo{year}{2020}\natexlab{}.
\newblock \showarticletitle{Gender Stereotype Reinforcement: Measuring the
  Gender Bias Conveyed by Ranking Algorithms}.
\newblock \bibinfo{journal}{\emph{Information Processing \& Management}}
  \bibinfo{volume}{57}, \bibinfo{number}{6} (\bibinfo{year}{2020}),
  \bibinfo{pages}{102377}.
\newblock


\bibitem[\protect\citeauthoryear{FitzGerald and Hurst}{FitzGerald and
  Hurst}{2017}]%
        {fitzgerald2017implicit}
\bibfield{author}{\bibinfo{person}{Chlo{\"e} FitzGerald} {and}
  \bibinfo{person}{Samia Hurst}.} \bibinfo{year}{2017}\natexlab{}.
\newblock \showarticletitle{Implicit Bias in Healthcare Professionals: A
  Systematic Review}.
\newblock \bibinfo{journal}{\emph{BMC Medical Ethics}} \bibinfo{volume}{18},
  \bibinfo{number}{1} (\bibinfo{year}{2017}), \bibinfo{pages}{1--18}.
\newblock
\urldef\tempurl%
\url{https://bmcmedethics.biomedcentral.com/articles/10.1186/s12910-017-0179-8}
\showURL{%
\tempurl}


\bibitem[\protect\citeauthoryear{Hajian, Bonchi, and Castillo}{Hajian
  et~al\mbox{.}}{2016}]%
        {hajian2016algorithmic}
\bibfield{author}{\bibinfo{person}{Sara Hajian}, \bibinfo{person}{Francesco
  Bonchi}, {and} \bibinfo{person}{Carlos Castillo}.}
  \bibinfo{year}{2016}\natexlab{}.
\newblock \showarticletitle{Algorithmic Bias: From Discrimination Discovery to
  Fairness-Aware Data Mining}. In \bibinfo{booktitle}{\emph{Proceedings of the
  22nd ACM SIGKDD International Conference on Knowledge Discovery and Data
  Mining}} \emph{(\bibinfo{series}{KDD '16})}. \bibinfo{publisher}{Association
  for Computing Machinery}, \bibinfo{address}{New York, NY, USA},
  \bibinfo{pages}{2125–2126}.
\newblock
\showISBNx{9781450342322}
\urldef\tempurl%
\url{https://doi.org/10.1145/2939672.2945386}
\showDOI{\tempurl}


\bibitem[\protect\citeauthoryear{Jin, Wang, Zhang, Zheng, Ding, Xia, and
  Pan}{Jin et~al\mbox{.}}{2023}]%
        {JIN2023101906}
\bibfield{author}{\bibinfo{person}{Di Jin}, \bibinfo{person}{Luzhi Wang},
  \bibinfo{person}{He Zhang}, \bibinfo{person}{Yizhen Zheng},
  \bibinfo{person}{Weiping Ding}, \bibinfo{person}{Feng Xia}, {and}
  \bibinfo{person}{Shirui Pan}.} \bibinfo{year}{2023}\natexlab{}.
\newblock \showarticletitle{A Survey on Fairness-aware Recommender Systems}.
\newblock \bibinfo{journal}{\emph{Information Fusion}}  \bibinfo{volume}{100}
  (\bibinfo{year}{2023}), \bibinfo{pages}{101906}.
\newblock
\showISSN{1566-2535}
\urldef\tempurl%
\url{https://doi.org/10.1016/j.inffus.2023.101906}
\showDOI{\tempurl}


\bibitem[\protect\citeauthoryear{Joachims, Granka, Pan, Hembrooke, and
  Gay}{Joachims et~al\mbox{.}}{2005}]%
        {joachims2005accurately}
\bibfield{author}{\bibinfo{person}{Thorsten Joachims}, \bibinfo{person}{Laura
  Granka}, \bibinfo{person}{Bing Pan}, \bibinfo{person}{Helene Hembrooke},
  {and} \bibinfo{person}{Geri Gay}.} \bibinfo{year}{2005}\natexlab{}.
\newblock \showarticletitle{Accurately Interpreting Clickthrough Data as
  Implicit Feedback}. In \bibinfo{booktitle}{\emph{Proceedings of the 28th
  Annual International ACM SIGIR Conference on Research and Development in
  Information Retrieval}}. \bibinfo{publisher}{ACM}, \bibinfo{pages}{154--161}.
\newblock


\bibitem[\protect\citeauthoryear{Joachims, Swaminathan, and Schnabel}{Joachims
  et~al\mbox{.}}{2017}]%
        {joachims2017unbiased}
\bibfield{author}{\bibinfo{person}{Thorsten Joachims}, \bibinfo{person}{Adith
  Swaminathan}, {and} \bibinfo{person}{Tobias Schnabel}.}
  \bibinfo{year}{2017}\natexlab{}.
\newblock \showarticletitle{Unbiased Learning-to-Rank with Biased Feedback}. In
  \bibinfo{booktitle}{\emph{Proceedings of the Tenth ACM International
  Conference on Web Search and Data Mining}}. ACM, \bibinfo{pages}{781--789}.
\newblock


\bibitem[\protect\citeauthoryear{Jolls and Sunstein}{Jolls and
  Sunstein}{2006}]%
        {jolls2006law}
\bibfield{author}{\bibinfo{person}{Christine Jolls} {and}
  \bibinfo{person}{Cass~R. Sunstein}.} \bibinfo{year}{2006}\natexlab{}.
\newblock \showarticletitle{The Law of Implicit Bias}.
\newblock \bibinfo{journal}{\emph{Calif. L. Rev.}}  \bibinfo{volume}{94}
  (\bibinfo{year}{2006}), \bibinfo{pages}{969}.
\newblock


\bibitem[\protect\citeauthoryear{Karomat}{Karomat}{2023}]%
        {karomat2023unconscious}
\bibfield{author}{\bibinfo{person}{Kilicheva Karomat}.}
  \bibinfo{year}{2023}\natexlab{}.
\newblock \showarticletitle{Unconscious Bias in the Workplace}.
\newblock \bibinfo{journal}{\emph{International Journal on Integrated
  Education}} \bibinfo{volume}{6}, \bibinfo{number}{4} (\bibinfo{year}{2023}),
  \bibinfo{pages}{266--268}.
\newblock


\bibitem[\protect\citeauthoryear{Katariya, Kveton, Szepesvari, and
  Wen}{Katariya et~al\mbox{.}}{2016}]%
        {katariya2016dcm}
\bibfield{author}{\bibinfo{person}{Sumeet Katariya}, \bibinfo{person}{Branislav
  Kveton}, \bibinfo{person}{Csaba Szepesvari}, {and} \bibinfo{person}{Zheng
  Wen}.} \bibinfo{year}{2016}\natexlab{}.
\newblock \showarticletitle{{DCM} Bandits: Learning to Rank with Multiple
  Clicks}. In \bibinfo{booktitle}{\emph{International Conference on Machine
  Learning}}. \bibinfo{pages}{1215--1224}.
\newblock


\bibitem[\protect\citeauthoryear{Kay, Matuszek, and Munson}{Kay
  et~al\mbox{.}}{2015}]%
        {kay2015unequal}
\bibfield{author}{\bibinfo{person}{Matthew Kay}, \bibinfo{person}{Cynthia
  Matuszek}, {and} \bibinfo{person}{Sean~A. Munson}.}
  \bibinfo{year}{2015}\natexlab{}.
\newblock \showarticletitle{Unequal Representation and Gender Stereotypes in
  Image Search Results for Occupations}. In
  \bibinfo{booktitle}{\emph{Proceedings of the 33rd Annual ACM Conference on
  Human Factors in Computing Systems}} \emph{(\bibinfo{series}{CHI '15})}.
  \bibinfo{publisher}{Association for Computing Machinery},
  \bibinfo{address}{New York, NY, USA}, \bibinfo{pages}{3819–3828}.
\newblock
\showISBNx{9781450331456}
\urldef\tempurl%
\url{https://doi.org/10.1145/2702123.2702520}
\showDOI{\tempurl}


\bibitem[\protect\citeauthoryear{Kleinberg and Raghavan}{Kleinberg and
  Raghavan}{2018}]%
        {kleinberg2018selection}
\bibfield{author}{\bibinfo{person}{Jon Kleinberg} {and} \bibinfo{person}{Manish
  Raghavan}.} \bibinfo{year}{2018}\natexlab{}.
\newblock \showarticletitle{Selection Problems in the Presence of Implicit
  Bias}. In \bibinfo{booktitle}{\emph{9th Innovations in Theoretical Computer
  Science Conference (ITCS 2018)}}. Schloss Dagstuhl-Leibniz-Zentrum fuer
  Informatik.
\newblock


\bibitem[\protect\citeauthoryear{Krieg, Parada-Cabaleiro, Medicus, Lesota,
  Schedl, and Rekabsaz}{Krieg et~al\mbox{.}}{2022a}]%
        {krieg2022grep}
\bibfield{author}{\bibinfo{person}{Klara Krieg}, \bibinfo{person}{Emilia
  Parada-Cabaleiro}, \bibinfo{person}{Gertraud Medicus}, \bibinfo{person}{Oleg
  Lesota}, \bibinfo{person}{Markus Schedl}, {and} \bibinfo{person}{Navid
  Rekabsaz}.} \bibinfo{year}{2022}\natexlab{a}.
\newblock \showarticletitle{Grep-BiasIR: A Dataset for Investigating Gender
  Representation-Bias in Information Retrieval Results}.
\newblock \bibinfo{journal}{\emph{arXiv preprint arXiv:2201.07754}}
  (\bibinfo{year}{2022}).
\newblock


\bibitem[\protect\citeauthoryear{Krieg, Parada-Cabaleiro, Schedl, and
  Rekabsaz}{Krieg et~al\mbox{.}}{2022b}]%
        {krieg2022perceived}
\bibfield{author}{\bibinfo{person}{Klara Krieg}, \bibinfo{person}{Emilia
  Parada-Cabaleiro}, \bibinfo{person}{Markus Schedl}, {and}
  \bibinfo{person}{Navid Rekabsaz}.} \bibinfo{year}{2022}\natexlab{b}.
\newblock \showarticletitle{Do Perceived Gender Biases in Retrieval Results
  Affect Relevance Judgements?}
\newblock \bibinfo{journal}{\emph{arXiv preprint arXiv:2203.01731}}
  (\bibinfo{year}{2022}).
\newblock


\bibitem[\protect\citeauthoryear{Kumar}{Kumar}{2009}]%
        {kumar2009complaint}
\bibfield{author}{\bibinfo{person}{Rajeev Kumar}.}
  \bibinfo{year}{2009}\natexlab{}.
\newblock \bibinfo{title}{RTI Complaint}.
\newblock
\newblock
\newblock
\shownote{Decision No. CIC/SG/C/2009/001088/5392, Complaint No.
  CIC/SG/C/2009/001088.}


\bibitem[\protect\citeauthoryear{Kveton, Szepesvari, Wen, and Ashkan}{Kveton
  et~al\mbox{.}}{2015}]%
        {kveton2015cascading}
\bibfield{author}{\bibinfo{person}{Branislav Kveton}, \bibinfo{person}{Csaba
  Szepesvari}, \bibinfo{person}{Zheng Wen}, {and} \bibinfo{person}{Azin
  Ashkan}.} \bibinfo{year}{2015}\natexlab{}.
\newblock \showarticletitle{Cascading Bandits: Learning to Rank in the Cascade
  Model}. In \bibinfo{booktitle}{\emph{International Conference on Machine
  Learning}}. \bibinfo{pages}{767--776}.
\newblock


\bibitem[\protect\citeauthoryear{Lagr{\'e}e, Vernade, and Capp{\'e}}{Lagr{\'e}e
  et~al\mbox{.}}{2016}]%
        {lagree2016multiple}
\bibfield{author}{\bibinfo{person}{Paul Lagr{\'e}e}, \bibinfo{person}{Claire
  Vernade}, {and} \bibinfo{person}{Olivier Capp{\'e}}.}
  \bibinfo{year}{2016}\natexlab{}.
\newblock \showarticletitle{Multiple-play Bandits in the Position-based Model}.
  In \bibinfo{booktitle}{\emph{Advances in Neural Information Processing
  Systems}}. \bibinfo{pages}{1597--1605}.
\newblock


\bibitem[\protect\citeauthoryear{Lattimore and Szepesv{\'a}ri}{Lattimore and
  Szepesv{\'a}ri}{2020}]%
        {lattimore2019bandit}
\bibfield{author}{\bibinfo{person}{Tor Lattimore} {and} \bibinfo{person}{Csaba
  Szepesv{\'a}ri}.} \bibinfo{year}{2020}\natexlab{}.
\newblock \bibinfo{booktitle}{\emph{Bandit Algorithms}}.
\newblock \bibinfo{publisher}{Cambridge University Press}.
\newblock


\bibitem[\protect\citeauthoryear{Li, Kveton, Lattimore, Markov, de~Rijke,
  Szepesv{\'a}ri, and Zoghi}{Li et~al\mbox{.}}{2020}]%
        {li2020bubblerank}
\bibfield{author}{\bibinfo{person}{Chang Li}, \bibinfo{person}{Branislav
  Kveton}, \bibinfo{person}{Tor Lattimore}, \bibinfo{person}{Ilya Markov},
  \bibinfo{person}{Maarten de Rijke}, \bibinfo{person}{Csaba Szepesv{\'a}ri},
  {and} \bibinfo{person}{Masrour Zoghi}.} \bibinfo{year}{2020}\natexlab{}.
\newblock \showarticletitle{BubbleRank: Safe Online Learning to Re-rank via
  Implicit Click Feedback}. In \bibinfo{booktitle}{\emph{Uncertainty in
  Artificial Intelligence}}. PMLR, \bibinfo{pages}{196--206}.
\newblock


\bibitem[\protect\citeauthoryear{Liu, Rattan, and Savani}{Liu
  et~al\mbox{.}}{2023}]%
        {liu2023reducing}
\bibfield{author}{\bibinfo{person}{Zhi Liu}, \bibinfo{person}{Aneeta Rattan},
  {and} \bibinfo{person}{Krishna Savani}.} \bibinfo{year}{2023}\natexlab{}.
\newblock \showarticletitle{Reducing Gender Bias in the Evaluation and
  Selection of Future Leaders: The Role of Decision-makers' Mindsets about the
  Universality of Leadership Potential}.
\newblock \bibinfo{journal}{\emph{Journal of Applied Psychology}}
  (\bibinfo{year}{2023}).
\newblock
\urldef\tempurl%
\url{https://doi.org/10.1037/apl0001112}
\showDOI{\tempurl}


\bibitem[\protect\citeauthoryear{Massey~Jr}{Massey~Jr}{1951}]%
        {massey1951kolmogorov}
\bibfield{author}{\bibinfo{person}{Frank~J Massey~Jr}.}
  \bibinfo{year}{1951}\natexlab{}.
\newblock \showarticletitle{The Kolmogorov-Smirnov Test for Goodness of Fit}.
\newblock \bibinfo{journal}{\emph{J. Amer. Statist. Assoc.}}
  \bibinfo{volume}{46}, \bibinfo{number}{253} (\bibinfo{year}{1951}),
  \bibinfo{pages}{68--78}.
\newblock


\bibitem[\protect\citeauthoryear{Mehrotra, Pradelski, and Vishnoi}{Mehrotra
  et~al\mbox{.}}{2022}]%
        {mehrotra2022selection}
\bibfield{author}{\bibinfo{person}{Anay Mehrotra}, \bibinfo{person}{Bary S.~R.
  Pradelski}, {and} \bibinfo{person}{Nisheeth~K. Vishnoi}.}
  \bibinfo{year}{2022}\natexlab{}.
\newblock \showarticletitle{Selection in the Presence of Implicit Bias: The
  Advantage of Intersectional Constraints}. In \bibinfo{booktitle}{\emph{2022
  ACM Conference on Fairness, Accountability, and Transparency}}
  \emph{(\bibinfo{series}{FAccT '22})}. \bibinfo{publisher}{Association for
  Computing Machinery}, \bibinfo{address}{New York, NY, USA},
  \bibinfo{pages}{599–609}.
\newblock
\showISBNx{9781450393522}
\urldef\tempurl%
\url{https://doi.org/10.1145/3531146.3533124}
\showDOI{\tempurl}


\bibitem[\protect\citeauthoryear{Molenberghs}{Molenberghs}{2013}]%
        {Molenberghs2013the}
\bibfield{author}{\bibinfo{person}{Pascal Molenberghs}.}
  \bibinfo{year}{2013}\natexlab{}.
\newblock \showarticletitle{The Neuroscience of In-group Bias}.
\newblock \bibinfo{journal}{\emph{Neuroscience \& Biobehavioral Reviews}}
  \bibinfo{volume}{37}, \bibinfo{number}{8} (\bibinfo{year}{2013}),
  \bibinfo{pages}{1530--1536}.
\newblock
\showISSN{0149-7634}
\urldef\tempurl%
\url{https://doi.org/10.1016/j.neubiorev.2013.06.002}
\showDOI{\tempurl}


\bibitem[\protect\citeauthoryear{Morik, Singh, Hong, and Joachims}{Morik
  et~al\mbox{.}}{2020}]%
        {morik2020Controlling}
\bibfield{author}{\bibinfo{person}{Marco Morik}, \bibinfo{person}{Ashudeep
  Singh}, \bibinfo{person}{Jessica Hong}, {and} \bibinfo{person}{Thorsten
  Joachims}.} \bibinfo{year}{2020}\natexlab{}.
\newblock \showarticletitle{Controlling Fairness and Bias in Dynamic
  Learning-to-Rank}. In \bibinfo{booktitle}{\emph{Proceedings of the 43rd
  International ACM SIGIR Conference on Research and Development in Information
  Retrieval}} \emph{(\bibinfo{series}{SIGIR '20})}.
  \bibinfo{publisher}{Association for Computing Machinery},
  \bibinfo{address}{New York, NY, USA}, \bibinfo{pages}{429–438}.
\newblock
\showISBNx{9781450380164}
\urldef\tempurl%
\url{https://doi.org/10.1145/3397271.3401100}
\showDOI{\tempurl}


\bibitem[\protect\citeauthoryear{O'Sullivan, Kagabo, Prasad, Laporte, and
  Aiyer}{O'Sullivan et~al\mbox{.}}{2023}]%
        {o2023racial}
\bibfield{author}{\bibinfo{person}{Lucy O'Sullivan}, \bibinfo{person}{Whitney
  Kagabo}, \bibinfo{person}{Niyathi Prasad}, \bibinfo{person}{Dawn Laporte},
  {and} \bibinfo{person}{Amiethab Aiyer}.} \bibinfo{year}{2023}\natexlab{}.
\newblock \showarticletitle{Racial and Ethnic Bias in Medical School Clinical
  Grading: A Review}.
\newblock \bibinfo{journal}{\emph{Journal of surgical education}}
  (\bibinfo{year}{2023}).
\newblock


\bibitem[\protect\citeauthoryear{Pobrotyn, Bartczak, Synowiec, Bialobrzeski,
  and Bojar}{Pobrotyn et~al\mbox{.}}{2020}]%
        {Pobrotyn2020ContextAwareLT}
\bibfield{author}{\bibinfo{person}{Przemyslaw Pobrotyn},
  \bibinfo{person}{Tomasz Bartczak}, \bibinfo{person}{Mikolaj Synowiec},
  \bibinfo{person}{Radoslaw Bialobrzeski}, {and} \bibinfo{person}{Jaroslaw
  Bojar}.} \bibinfo{year}{2020}\natexlab{}.
\newblock \showarticletitle{Context-Aware Learning to Rank with
  Self-Attention}.
\newblock \bibinfo{journal}{\emph{arXiv preprint arXiv:2005.10084}}
  (\bibinfo{year}{2020}).
\newblock


\bibitem[\protect\citeauthoryear{Qin and Liu}{Qin and Liu}{2013}]%
        {qin2013introducing}
\bibfield{author}{\bibinfo{person}{Tao Qin} {and} \bibinfo{person}{Tie-Yan
  Liu}.} \bibinfo{year}{2013}\natexlab{}.
\newblock \showarticletitle{Introducing LETOR 4.0 Datasets}.
\newblock \bibinfo{journal}{\emph{arXiv preprint arXiv:1306.2597}}
  (\bibinfo{year}{2013}).
\newblock


\bibitem[\protect\citeauthoryear{Radlinski, Kleinberg, and Joachims}{Radlinski
  et~al\mbox{.}}{2008}]%
        {radlinski2008learning}
\bibfield{author}{\bibinfo{person}{Filip Radlinski}, \bibinfo{person}{Robert
  Kleinberg}, {and} \bibinfo{person}{Thorsten Joachims}.}
  \bibinfo{year}{2008}\natexlab{}.
\newblock \showarticletitle{Learning Diverse Rankings with Multi-armed
  Bandits}. In \bibinfo{booktitle}{\emph{Proceedings of the 25th International
  Conference on Machine Learning}}. \bibinfo{pages}{784--791}.
\newblock


\bibitem[\protect\citeauthoryear{Raj and Ekstrand}{Raj and Ekstrand}{2022}]%
        {raj2022measuring}
\bibfield{author}{\bibinfo{person}{Amifa Raj} {and} \bibinfo{person}{Michael~D.
  Ekstrand}.} \bibinfo{year}{2022}\natexlab{}.
\newblock \showarticletitle{Measuring Fairness in Ranked Results: An Analytical
  and Empirical Comparison}. In \bibinfo{booktitle}{\emph{Proceedings of the
  45th International ACM SIGIR Conference on Research and Development in
  Information Retrieval}} \emph{(\bibinfo{series}{SIGIR '22})}.
  \bibinfo{publisher}{Association for Computing Machinery},
  \bibinfo{address}{New York, NY, USA}, \bibinfo{pages}{726–736}.
\newblock
\showISBNx{9781450387323}
\urldef\tempurl%
\url{https://doi.org/10.1145/3477495.3532018}
\showDOI{\tempurl}


\bibitem[\protect\citeauthoryear{Saito and Joachims}{Saito and
  Joachims}{2022}]%
        {saito2022fair}
\bibfield{author}{\bibinfo{person}{Yuta Saito} {and} \bibinfo{person}{Thorsten
  Joachims}.} \bibinfo{year}{2022}\natexlab{}.
\newblock \showarticletitle{Fair Ranking as Fair Division: Impact-Based
  Individual Fairness in Ranking}. In \bibinfo{booktitle}{\emph{Proceedings of
  the 28th ACM SIGKDD Conference on Knowledge Discovery and Data Mining}}
  \emph{(\bibinfo{series}{KDD '22})}. \bibinfo{publisher}{Association for
  Computing Machinery}, \bibinfo{address}{New York, NY, USA},
  \bibinfo{pages}{1514–1524}.
\newblock
\showISBNx{9781450393850}
\urldef\tempurl%
\url{https://doi.org/10.1145/3534678.3539353}
\showDOI{\tempurl}


\bibitem[\protect\citeauthoryear{Sarvi, Heuss, Aliannejadi, Schelter, and
  de~Rijke}{Sarvi et~al\mbox{.}}{2022}]%
        {sarvi2021understanding}
\bibfield{author}{\bibinfo{person}{Fatemeh Sarvi}, \bibinfo{person}{Maria
  Heuss}, \bibinfo{person}{Mohammad Aliannejadi}, \bibinfo{person}{Sebastian
  Schelter}, {and} \bibinfo{person}{Maarten de Rijke}.}
  \bibinfo{year}{2022}\natexlab{}.
\newblock \showarticletitle{Understanding and Mitigating the Effect of Outliers
  in Fair Ranking}. In \bibinfo{booktitle}{\emph{WSDM 2022: The Fifteenth
  International Conference on Web Search and Data Mining}}.
  \bibinfo{publisher}{ACM}, \bibinfo{pages}{861--869}.
\newblock


\bibitem[\protect\citeauthoryear{Saxena and Jain}{Saxena and Jain}{2021}]%
        {saxena2021exploring}
\bibfield{author}{\bibinfo{person}{Shrikant Saxena} {and}
  \bibinfo{person}{Shweta Jain}.} \bibinfo{year}{2021}\natexlab{}.
\newblock \showarticletitle{Exploring and Mitigating Gender Bias in Recommender
  Systems with Explicit Feedback}.
\newblock \bibinfo{journal}{\emph{arXiv preprint arXiv:2112.02530}}
  (\bibinfo{year}{2021}).
\newblock


\bibitem[\protect\citeauthoryear{Singh and Joachims}{Singh and
  Joachims}{2018}]%
        {singh-2018-fairness}
\bibfield{author}{\bibinfo{person}{Ashudeep Singh} {and}
  \bibinfo{person}{Thorsten Joachims}.} \bibinfo{year}{2018}\natexlab{}.
\newblock \showarticletitle{Fairness of Exposure in Rankings}. In
  \bibinfo{booktitle}{\emph{Proceedings of the 24th ACM SIGKDD International
  Conference on Knowledge Discovery \& Data Mining}}
  \emph{(\bibinfo{series}{KDD '18})}. \bibinfo{publisher}{Association for
  Computing Machinery}, \bibinfo{address}{New York, NY, USA},
  \bibinfo{pages}{2219–2228}.
\newblock
\showISBNx{9781450355520}
\urldef\tempurl%
\url{https://doi.org/10.1145/3219819.3220088}
\showDOI{\tempurl}


\bibitem[\protect\citeauthoryear{Singh and Joachims}{Singh and
  Joachims}{2019}]%
        {singh2019policy}
\bibfield{author}{\bibinfo{person}{Ashudeep Singh} {and}
  \bibinfo{person}{Thorsten Joachims}.} \bibinfo{year}{2019}\natexlab{}.
\newblock \showarticletitle{Policy Learning for Fairness in Ranking}. In
  \bibinfo{booktitle}{\emph{Advances in Neural Information Processing
  Systems}}, \bibfield{editor}{\bibinfo{person}{H.~Wallach},
  \bibinfo{person}{H.~Larochelle}, \bibinfo{person}{A.~Beygelzimer},
  \bibinfo{person}{F.~d\textquotesingle Alch\'{e}-Buc},
  \bibinfo{person}{E.~Fox}, {and} \bibinfo{person}{R.~Garnett}} (Eds.),
  Vol.~\bibinfo{volume}{32}. \bibinfo{publisher}{Curran Associates, Inc.}
\newblock
\urldef\tempurl%
\url{https://proceedings.neurips.cc/paper_files/paper/2019/file/9e82757e9a1c12cb710ad680db11f6f1-Paper.pdf}
\showURL{%
\tempurl}


\bibitem[\protect\citeauthoryear{S\"{u}hr, Hilgard, and Lakkaraju}{S\"{u}hr
  et~al\mbox{.}}{2021}]%
        {suhr2021does}
\bibfield{author}{\bibinfo{person}{Tom S\"{u}hr}, \bibinfo{person}{Sophie
  Hilgard}, {and} \bibinfo{person}{Himabindu Lakkaraju}.}
  \bibinfo{year}{2021}\natexlab{}.
\newblock \showarticletitle{Does Fair Ranking Improve Minority Outcomes?
  Understanding the Interplay of Human and Algorithmic Biases in Online Hiring}
  \emph{(\bibinfo{series}{AIES '21})}. \bibinfo{publisher}{Association for
  Computing Machinery}, \bibinfo{address}{New York, NY, USA},
  \bibinfo{pages}{989–999}.
\newblock
\showISBNx{9781450384735}
\urldef\tempurl%
\url{https://doi.org/10.1145/3461702.3462602}
\showDOI{\tempurl}


\bibitem[\protect\citeauthoryear{Vardasbi, de~Rijke, and Markov}{Vardasbi
  et~al\mbox{.}}{2021}]%
        {vardasbi2021mixture}
\bibfield{author}{\bibinfo{person}{Ali Vardasbi}, \bibinfo{person}{Maarten de
  Rijke}, {and} \bibinfo{person}{Ilya Markov}.}
  \bibinfo{year}{2021}\natexlab{}.
\newblock \showarticletitle{Mixture-Based Correction for Position and Trust
  Bias in Counterfactual Learning to Rank}. In
  \bibinfo{booktitle}{\emph{Proceedings of the 30th ACM International
  Conference on Information \& Knowledge Management}}
  \emph{(\bibinfo{series}{CIKM '21})}. \bibinfo{publisher}{Association for
  Computing Machinery}, \bibinfo{address}{New York, NY, USA},
  \bibinfo{pages}{1869–1878}.
\newblock
\showISBNx{9781450384469}
\urldef\tempurl%
\url{https://doi.org/10.1145/3459637.3482275}
\showDOI{\tempurl}


\bibitem[\protect\citeauthoryear{Vardasbi, Oosterhuis, and de~Rijke}{Vardasbi
  et~al\mbox{.}}{2020}]%
        {vardasbi2020when}
\bibfield{author}{\bibinfo{person}{Ali Vardasbi}, \bibinfo{person}{Harrie
  Oosterhuis}, {and} \bibinfo{person}{Maarten de Rijke}.}
  \bibinfo{year}{2020}\natexlab{}.
\newblock \showarticletitle{When Inverse Propensity Scoring Does Not Work:
  Affine Corrections for Unbiased Learning to Rank}. In
  \bibinfo{booktitle}{\emph{Proceedings of the 29th ACM International
  Conference on Information \& Knowledge Management}}
  \emph{(\bibinfo{series}{CIKM '20})}. \bibinfo{publisher}{Association for
  Computing Machinery}, \bibinfo{address}{New York, NY, USA},
  \bibinfo{pages}{1475–1484}.
\newblock
\showISBNx{9781450368599}
\urldef\tempurl%
\url{https://doi.org/10.1145/3340531.3412031}
\showDOI{\tempurl}


\bibitem[\protect\citeauthoryear{Vardasbi, Sarvi, and de~Rijke}{Vardasbi
  et~al\mbox{.}}{2022}]%
        {vardasbi2022probabilistic}
\bibfield{author}{\bibinfo{person}{Ali Vardasbi}, \bibinfo{person}{Fatemeh
  Sarvi}, {and} \bibinfo{person}{Maarten de Rijke}.}
  \bibinfo{year}{2022}\natexlab{}.
\newblock \showarticletitle{Probabilistic Permutation Graph Search: Black-Box
  Optimization for Fairness in Ranking}. In
  \bibinfo{booktitle}{\emph{Proceedings of the 45th International ACM SIGIR
  Conference on Research and Development in Information Retrieval}}
  \emph{(\bibinfo{series}{SIGIR '22})}. \bibinfo{publisher}{Association for
  Computing Machinery}, \bibinfo{address}{New York, NY, USA},
  \bibinfo{pages}{715–725}.
\newblock
\showISBNx{9781450387323}
\urldef\tempurl%
\url{https://doi.org/10.1145/3477495.3532045}
\showDOI{\tempurl}


\bibitem[\protect\citeauthoryear{Vlasceanu and Amodio}{Vlasceanu and
  Amodio}{2022}]%
        {vlasceanu2022propagation}
\bibfield{author}{\bibinfo{person}{Madalina Vlasceanu} {and}
  \bibinfo{person}{David~M. Amodio}.} \bibinfo{year}{2022}\natexlab{}.
\newblock \showarticletitle{Propagation of Societal Gender Inequality by
  Internet Search Algorithms}.
\newblock \bibinfo{journal}{\emph{Proceedings of the National Academy of
  Sciences}} \bibinfo{volume}{119}, \bibinfo{number}{29}
  (\bibinfo{year}{2022}), \bibinfo{pages}{e2204529119}.
\newblock
\urldef\tempurl%
\url{https://doi.org/10.1073/pnas.2204529119}
\showDOI{\tempurl}
\showeprint{https://www.pnas.org/doi/pdf/10.1073/pnas.2204529119}


\bibitem[\protect\citeauthoryear{Wang, Golbandi, Bendersky, Metzler, and
  Najork}{Wang et~al\mbox{.}}{2018}]%
        {wang2018position}
\bibfield{author}{\bibinfo{person}{Xuanhui Wang}, \bibinfo{person}{Nadav
  Golbandi}, \bibinfo{person}{Michael Bendersky}, \bibinfo{person}{Donald
  Metzler}, {and} \bibinfo{person}{Marc Najork}.}
  \bibinfo{year}{2018}\natexlab{}.
\newblock \showarticletitle{Position Bias Estimation for Unbiased Learning to
  Rank in Personal Search}. In \bibinfo{booktitle}{\emph{Proceedings of the
  Eleventh ACM International Conference on Web Search and Data Mining}}. ACM,
  \bibinfo{pages}{610--618}.
\newblock


\bibitem[\protect\citeauthoryear{Wu, Mitra, Ma, Diaz, and Liu}{Wu
  et~al\mbox{.}}{2022}]%
        {wu2022joint}
\bibfield{author}{\bibinfo{person}{Haolun Wu}, \bibinfo{person}{Bhaskar Mitra},
  \bibinfo{person}{Chen Ma}, \bibinfo{person}{Fernando Diaz}, {and}
  \bibinfo{person}{Xue Liu}.} \bibinfo{year}{2022}\natexlab{}.
\newblock \showarticletitle{Joint Multisided Exposure Fairness for
  Recommendation}. In \bibinfo{booktitle}{\emph{Proceedings of the 45th
  International ACM SIGIR Conference on Research and Development in Information
  Retrieval}} \emph{(\bibinfo{series}{SIGIR '22})}.
  \bibinfo{publisher}{Association for Computing Machinery},
  \bibinfo{address}{New York, NY, USA}, \bibinfo{pages}{703–714}.
\newblock
\showISBNx{9781450387323}
\urldef\tempurl%
\url{https://doi.org/10.1145/3477495.3532007}
\showDOI{\tempurl}


\bibitem[\protect\citeauthoryear{Yadav, Du, and Joachims}{Yadav
  et~al\mbox{.}}{2021}]%
        {yadav2021policy}
\bibfield{author}{\bibinfo{person}{Himank Yadav}, \bibinfo{person}{Zhengxiao
  Du}, {and} \bibinfo{person}{Thorsten Joachims}.}
  \bibinfo{year}{2021}\natexlab{}.
\newblock \showarticletitle{Policy-Gradient Training of Fair and Unbiased
  Ranking Functions}. In \bibinfo{booktitle}{\emph{Proceedings of the 44th
  International ACM SIGIR Conference on Research and Development in Information
  Retrieval}}. \bibinfo{publisher}{Association for Computing Machinery},
  \bibinfo{address}{New York, NY, USA}, \bibinfo{pages}{1044--1053}.
\newblock
\showISBNx{9781450380379}
\urldef\tempurl%
\url{https://doi.org/10.1145/3404835.3462953}
\showURL{%
\tempurl}


\bibitem[\protect\citeauthoryear{Zdaniuk and Levine}{Zdaniuk and
  Levine}{2001}]%
        {Zdaniuk2001group}
\bibfield{author}{\bibinfo{person}{Bozena Zdaniuk} {and}
  \bibinfo{person}{John~M. Levine}.} \bibinfo{year}{2001}\natexlab{}.
\newblock \showarticletitle{Group Loyalty: Impact of Members' Identification
  and Contributions}.
\newblock \bibinfo{journal}{\emph{Journal of Experimental Social Psychology}}
  \bibinfo{volume}{37}, \bibinfo{number}{6} (\bibinfo{year}{2001}),
  \bibinfo{pages}{502--509}.
\newblock
\showISSN{0022-1031}
\urldef\tempurl%
\url{https://doi.org/10.1006/jesp.2000.1474}
\showDOI{\tempurl}


\bibitem[\protect\citeauthoryear{Zoghi, Tunys, Li, Jose, Chen, Chin, and
  de~Rijke}{Zoghi et~al\mbox{.}}{2016}]%
        {zoghi2016click}
\bibfield{author}{\bibinfo{person}{Masrour Zoghi},
  \bibinfo{person}{Tom{\'a}{\v{s}} Tunys}, \bibinfo{person}{Lihong Li},
  \bibinfo{person}{Damien Jose}, \bibinfo{person}{Junyan Chen},
  \bibinfo{person}{Chun~Ming Chin}, {and} \bibinfo{person}{Maarten de Rijke}.}
  \bibinfo{year}{2016}\natexlab{}.
\newblock \showarticletitle{Click-based Hot Fixes for Underperforming Torso
  Queries}. In \bibinfo{booktitle}{\emph{Proceedings of the 39th International
  ACM SIGIR conference on Research and Development in Information Retrieval}}.
  \bibinfo{pages}{195--204}.
\newblock


\end{thebibliography}

\end{document}